\documentclass[reprint,aps,pre,longbibliography]{revtex4-2}

\usepackage{amsmath,amssymb,mathtools,bm}
\usepackage{amsthm}
\usepackage{graphicx}
\usepackage{xcolor}
\usepackage{siunitx}
\usepackage{hyperref}
\usepackage{physics}
\usepackage{enumitem}

\newcommand{\gradx}{\nabla_{\!x}}
\newcommand{\grads}{\nabla_{\!s}}
\newcommand{\grady}{\nabla_{\!y}}
\newcommand{\divv}{\nabla\!\cdot}
\newcommand{\gradG}{\nabla_{\!\Gamma}}
\newcommand{\divG}{\nabla_{\!\Gamma}\!\cdot}
\newcommand{\K}{\mathbf K}

\newcommand{\vt}{\mathbf v_t}
\newcommand{\valpha}{\mathbf v_\alpha}

\newcommand{\pg}{p_g}
\newcommand{\Lam}{\Lambda_t}
\newcommand{\lam}{\lambda}
\newcommand{\s}{\bm s}

\newcommand{\poro}{\phi}
\newcommand{\RR}{\mathbb R}
\newcommand{\cT}{\Theta}
\newcommand{\cH}{\mathcal H}
\newcommand{\cQ}{\mathcal Q}
\newcommand{\cD}{\mathcal D}

\newcommand{\Tmf}{\mathcal T^{mf}}

\newtheorem{theorem}{Theorem}

\newtheorem{corollary}{Corollary}
\newtheorem{definition}{Definition}

\begin{document}

\title{Nonisothermal global-pressure exactness in fractured multiphase flow with aperture feedback}

\author{Christian Tantardini}
\email{christiantantardini@ymail.com}
\affiliation{Center for Integrative Petroleum Research, King Fahd University of Petroleum and Minerals, Dhahran 31261, Saudi Arabia}

\author{Fernando Alonso-Marroqu\'in}
\email{fernando@quantumfi.net}
\affiliation{Department of Computational Physics for Engineering Material, ETH Zurich, 8092 Zurich, Switzerland}
\affiliation{Center for Integrative Petroleum Research, King Fahd University of Petroleum and Minerals, Dhahran 31261, Saudi Arabia}
\date{\today}

\begin{abstract}
Global-pressure formulations recast multiphase Darcy flow in terms of a single pressure driving the total flux. Their exact equivalence to phase-pressure formulations holds only when the constitutive data satisfy the compatibility conditions required for a total-differential structure and its generalized nonisothermal extension. Here, we derive the exactness criterion for temperature-dependent mobilities and capillary pressures. We show that equivalence depends on whether the mobility-weighted capillary contribution is path independent in the saturation--temperature domain, so that it can be absorbed into a scalar global pressure. This yields the classical compatibility conditions within the saturation sector and a distinct mixed saturation--temperature condition that arises only in nonisothermal settings. We then incorporate this structure into a reduced matrix--fracture model with heat transport, matrix--fracture thermal exchange, and evolving aperture. Numerical benchmarks recover the three regimes predicted by the theory: globally exact, exact on each fixed-temperature slice but not on the full saturation--temperature domain, and fully nonexact. In fractured systems, thermal forcing alone can drive transitions between these regimes, while aperture evolution changes the path through state space. When saturation-sector exactness is lost, a least-squares projection on fixed-temperature slices extracts the nearest gradient component of the mobility-weighted capillary field. This yields a conservative slice-wise scalar-pressure surrogate and a quantitative projection residual. The residual separates saturation-sector nonintegrability from the mixed saturation--temperature incompatibility that controls genuinely nonisothermal loss of exactness. The framework links nonisothermal exactness theory, fractured-flow dynamics, and conservative reduced closure in a global-pressure formulation.
\end{abstract}

\maketitle

\section{Introduction}
\label{sec:intro}

Immiscible displacement in porous and fractured media is governed by the coupling between pressure-driven transport, capillary redistribution, and geometry-controlled flow localization. The Buckley--Leverett formulation isolates this structure by
combining a pressure-driven total flux with fractional-flow transport
\cite{BuckleyLeverett1942,Leverett1941}. Recent studies have extended this picture in complementary directions: capillary-number-dependent upscaling explains how heterogeneity moves Buckley--Leverett dynamics between viscous and capillary
regimes \cite{BenhamBickleNeufeld2021JFM}; upscaled pressure-difference theory shows that macroscopic capillary pressure contains hydrodynamic and interfacial
contributions beyond a saturation-only law \cite{LasseuxValdesParada2023JFM};
and depth-integrated fracture modelling shows how aperture variation, wall drag, and out-of-plane curvature control immiscible displacement in rough fractures
\cite{KrishnaMeheustNeuweiler2025JFM}. A central challenge is to formulate the corresponding nonisothermal theory for regimes in which multiphase displacement,
fracture-aperture evolution, matrix--fracture thermal exchange, and
temperature-dependent capillary laws act simultaneously.

For multiphase Darcy flow, one of the most useful reductions is the
global-pressure formulation. Instead of solving for all phase pressures
independently, one introduces a scalar pressure whose gradient drives the total
volumetric flux, while capillary effects enter as mobility-weighted drift
terms. For two-phase flow this viewpoint is classical, and fully equivalent
compressible variants are also available
\cite{ChaventJaffre1986,AmazianeJurak2008,Amaziane2011}. Nonisothermal
two-phase fractional-flow formulations have also been developed using
saturation, nonisothermal global pressure, and temperature as primary variables
\cite{Amaziane2024CVFENonisothermal}. For three-phase flow,
however, exact equivalence with the phase-pressure formulation is no longer
automatic. The extension of the two-phase picture is mathematically delicate:
three-phase Buckley–Leverett systems can lose strict hyperbolicity and develop
umbilic or mixed-type behavior \cite{Holden1990}. At the constitutive level,
exact reduction requires compatibility between mobilities and capillary data.
In the isothermal setting, this requirement is expressed by the
Total-Differential (TD) condition and its generalized Total-Differential
(gTD) extensions
\cite{Chavent2009,DiChiara2010}. When these conditions hold, the
mobility-weighted capillary contribution can be absorbed into a scalar
potential on the saturation simplex, so the associated integral becomes path
independent. Constructing physically plausible three-phase data with this
property remains a nontrivial constitutive problem
\cite{DiChiara2010,Schaefer2020}. Comparison studies have also shown why the
global-pressure formulation is computationally attractive when such
compatibility is available \cite{ChenEwing1997}.

Our recent Global Buckley–Leverett framework for multicomponent flow in
fractured media was developed within this isothermal setting. It combines
conservative global-pressure reduction with equation-of-state coupling,
fractured-media transport, and dynamic capillarity \cite{nfhq-z872}, while retaining the central
role of Total-Differential compatibility in determining whether exact pressure
decoupling is available
\cite{tantardini2025globalbuckleyleverettmulticomponentflow}. The present work
extends this framework to nonisothermal fractured multiphase flow by treating
saturation and temperature as a coupled thermodynamic state.

In nonisothermal multiphase flow, viscosities, densities, mobilities, and
capillary laws depend on temperature. Thermodynamic analyses and
experiments also show that capillary pressure itself can vary with temperature
\cite{HassanizadehGray1993,SheSleep1998,Jurak2019,Standnes2021,Miller2019}.
The relevant state domain is therefore no longer the saturation simplex alone,
but an augmented domain in which saturation and temperature evolve together.

This shift changes the exactness question in an essential way. In the
isothermal theory, one asks whether the mobility-weighted capillary
contribution is integrable on saturation domain. In the nonisothermal theory,
one must ask whether the corresponding contribution is integrable on the full
saturation--temperature state domain. The classical compatibility requirement
on each fixed-temperature saturation slice remains part of the answer, but it
is no longer sufficient. A system may remain exact on every fixed-temperature
slice and still fail to admit a globally exact nonisothermal reduction once
thermal evolution is included. Conversely, the loss of exactness can be
diagnosed by separating saturation-sector incompatibility from genuinely mixed
saturation–temperature incompatibility.

The key observation is that nonisothermal global-pressure exactness is governed
by the coupled saturation--temperature state domain as a whole; it cannot be
deduced solely from exactness tests on individual fixed-temperature saturation
slices. Thus, the novelty is not the general idea of global pressure or the
standard calculus fact that a conservative vector field admits a scalar
potential. Rather, the new contribution is the identification of the
mobility-weighted capillary field appropriate to temperature-dependent
multiphase capillary data and the resulting mixed compatibility condition. The
classical isothermal condition reappears as the restriction of the broader
criterion to fixed-temperature slices, while the additional mixed compatibility
determines whether those slice-wise potentials can be assembled into a single
nonisothermal global-pressure potential. The precise augmented-state criterion
is developed in Sec.~\ref{sec:thermal_gtd}.

We place this structure in a fractured setting motivated by geothermal and
thermo-hydraulic reservoir applications. Mixed-dimensional and reduced fracture
models are widely used because they retain matrix--fracture exchange while
avoiding the cost of resolving thin fractures explicitly
\cite{Berre2019,PruessNarasimhan1985,Ahmed2017}. For geothermal applications,
these models must also capture coupled mass and heat transport together with
matrix--fracture thermal exchange
\cite{Wang2021FracturedGeothermal,Ding2022}. At larger field scale, discrete
fracture network descriptions are often required because geothermal reservoirs
are anisotropic and heterogeneous, and fracture orientation, density,
connectivity, transmissivity, and aperture control the effective flow pathways
\cite{Medici2023DFNGeothermal,Lei2023EGSDFNTHM}. At the fracture scale,
transmissivity is highly sensitive to aperture and, in the smooth-fracture
limit, scales cubically with it \cite{Witherspoon1980}.
Under thermal injection
and production, thermo-poroelastic stresses can alter aperture and thereby
modify permeability and channeling \cite{Ghassemi2008,Pandey2017,Bisdom2016}.
Temperature therefore affects the problem in two distinct ways: it changes the
constitutive conditions for exactness, and it changes the path by which the
coupled matrix–fracture system moves through state space.

The fluid-mechanics relevance of this coupling is strongest when thermal
fronts, capillary effects, matrix–fracture heat exchange, buoyancy, and
aperture-sensitive transmissivity act on comparable scales. In such regimes,
temperature does not merely modify material parameters passively; it reshapes
the local trajectory followed by the multiphase state. This trajectory
determines whether the global-pressure reduction remains exact, becomes only
slice-wise exact, or must be treated as a projected reduced closure.

The study has three objectives. First, we derive the nonisothermal extension of
the Total-Differential exactness criterion on the augmented state domain of saturation and temperature. Second, we embed this structure in a reduced
matrix–fracture thermal model with evolving aperture. Third, we introduce
diagnostics and a conservative slice-wise projection to quantify the loss of
exactness when the mobility-weighted capillary field is not globally
integrable. In nonexact regimes, the projected formulation is used as a
reduced conservative surrogate; it does not restore full equivalence to the
original phase-pressure formulation when the mixed nonisothermal defect is
present.

The remainder is organized as follows. Section~\ref{sec:gbl} reviews the
global-pressure structure and the isothermal Total-Differential (TD) condition and its generalized
Total-Differential (gTD) background \cite{Chavent2009,DiChiara2010}.
Section~\ref{sec:thermal_gtd} presents the nonisothermal exactness theorem on
the augmented state domain. Section~\ref{sec:matrix_fracture} embeds the theory
in a reduced matrix–fracture thermal system with evolving aperture.
Section~\ref{sec:projection} introduces the conservative slice-wise projection
surrogate for nonexact regimes.
Section~\ref{sec:numerics} presents numerical verification and diagnostic
validation benchmarks, including augmented-state diagnostics, fractured thermal
fronts, aperture feedback, and quantitative defect summaries.
Section~\ref{sec:conclusion} summarizes the main results and limitations.

\section{Global pressure and isothermal TD/gTD background}
\label{sec:gbl}

\subsection{Multiphase Darcy structure}

We begin from the Darcy-scale description of immiscible multiphase flow in
porous media, in a form that supports the later nonisothermal extension and
underlies both Buckley--Leverett-type reductions and global-pressure
formulations
\cite{BuckleyLeverett1942,Leverett1941,Bear1972,ChaventJaffre1986}. Let
$\Omega\subset \RR^d$ be a porous domain. For phases
$\alpha=1,\dots,n_p$, let $p_\alpha$ denote the phase pressures,
$S_\alpha$ the phase saturations, and $\lam_\alpha(S,T)\ge 0$ the phase
mobilities, typically written as $\lam_\alpha=k_{r\alpha}/\mu_\alpha$.
Here $S=(S_1,\dots,S_{n_p})$ denotes the full saturation state.

The saturations satisfy
\begin{equation}
S_\alpha\ge 0,
\qquad
\sum_{\alpha=1}^{n_p}S_\alpha=1.
\label{eq:sat_constraint}
\end{equation}
For three phases, it is convenient to introduce simplex coordinates
\begin{equation}
\s=(s_1,s_2)=(S_1,S_2),
\qquad
S_3=1-s_1-s_2,
\label{eq:simplex_coords}
\end{equation}
on the ternary saturation simplex
\begin{equation}
\Delta=\{\s\in\RR^2:\ s_1\ge 0,\ s_2\ge 0,\ s_1+s_2\le 1\}.
\label{eq:simplex_def}
\end{equation}

Ignoring fracture geometry for the moment, the phase Darcy velocities are
\begin{equation}
\valpha
=
-\K\,\lam_\alpha(S,T)
\Big(
\gradx p_\alpha-\rho_\alpha(T)\,g\,\gradx z
\Big),
\label{eq:phase_darcy}
\end{equation}
where $\K$ is the intrinsic permeability tensor, $\rho_\alpha$ is the phase
density, $z$ is elevation, and $g$ is the gravitational acceleration. The
total mobility and total Darcy velocity are
\begin{equation}
\Lam(S,T)=\sum_{\alpha=1}^{n_p}\lam_\alpha(S,T),
\qquad
\vt=\sum_{\alpha=1}^{n_p}\valpha.
\label{eq:total_defs}
\end{equation}
With a reference phase $r$ fixed, the capillary pressures are
\begin{equation}
p_{c,\alpha}(S,T)=p_\alpha-p_r,
\qquad
\alpha\neq r,
\qquad
p_{c,r}\equiv 0.
\label{eq:capillary_def}
\end{equation}

A global-pressure gradient $\gradx \pg$ is introduced through the
mobility-weighted identity
\begin{equation}
\sum_{\alpha=1}^{n_p}
\lam_\alpha(S,T)\,\gradx p_\alpha
=
\Lam(S,T)\,\gradx \pg,
\label{eq:pg_identity}
\end{equation}
which yields
\begin{align}
\vt
&=
-\K\left(\Lam(S,T)\,\gradx \pg
-
\sum_{\alpha=1}^{n_p}\lam_\alpha(S,T)\,\rho_\alpha(T)\,g\,\gradx z\right)
\nonumber \\
&=
-\K\,\Lam(S,T)
\left(\gradx \pg-\rho_t(S,T)\,g\,\gradx z\right),
\label{eq:vt_global}
\end{align}
with mobility-weighted density
\begin{equation}
\rho_t(S,T):=
\frac{1}{\Lam(S,T)}
\sum_{\alpha=1}^{n_p}
\lam_\alpha(S,T)\,\rho_\alpha(T).
\label{eq:rho_t}
\end{equation}

Equation~\eqref{eq:pg_identity} defines $\gradx \pg$ pointwise. Exactness is
a stronger statement: it asks whether this pointwise definition can be
represented by a scalar pressure correction depending only on the local
thermodynamic state.

Because the saturations satisfy \eqref{eq:sat_constraint}, the independent
state variables are the saturation coordinates on the simplex and the
temperature. In the nonisothermal setting we therefore work on a state domain
\begin{equation}
\mathcal U\subset \Delta\times\Theta,
\qquad
\Theta=[T_{\min},T_{\max}],
\label{eq:state_domain_intro}
\end{equation}
where $\Delta$ is the admissible saturation simplex and $\Theta$ is the
admissible temperature interval.

Throughout the paper, $\gradx$ denotes the spatial gradient in the matrix
domain, while $\gradG$ denotes the tangential spatial gradient along the
fracture. Gradients with respect to saturation variables are denoted by
$\grads$. For the augmented state variable
\begin{equation}
    y=(s_1,\ldots,s_{n_p-1},T),
\end{equation}
we write
\begin{equation}
\nabla_{(\s,T)}=\grady
=
(\partial_{s_1},\ldots,\partial_{s_{n_p-1}},\partial_T).
\end{equation}
Thus, if $\Pi(\s,T)$ is a scalar function on the state domain,  $\grady\Pi$ is a gradient in the saturation--temperature state domain,
not a spatial gradient. When the state variables depend on position,
$\s=\s(x,t)$ and $T=T(x,t)$, the spatial gradient of the composite function
$\Pi(\s(x,t),T(x,t))$ is obtained by the chain rule,
\begin{widetext}
\begin{equation}
\gradx[\Pi(\s(x,t),T(x,t))]
=
\sum_{i=1}^{n_p-1}
\partial_{s_i}\Pi(\s,T)\,\gradx s_i
+
\partial_T\Pi(\s,T)\,\gradx T .
\label{eq:chain_rule_Pi_intro}
\end{equation}
\end{widetext}
Using the reference phase $r$, write
\begin{equation}
p_\alpha=p_r+p_{c,\alpha}(\s,T),
\qquad
p_{c,r}\equiv 0 .
\label{eq:palpha_pc_intro}
\end{equation}
Then
\begin{equation}
\gradx p_\alpha
=
\gradx p_r
+
\sum_{i=1}^{n_p-1}
\partial_{s_i}p_{c,\alpha}(\s,T)\,\gradx s_i
+
\partial_T p_{c,\alpha}(\s,T)\,\gradx T .
\label{eq:phase_grad_state_intro}
\end{equation}
Substitution into \eqref{eq:pg_identity} gives
\begin{equation}
\gradx \pg
=
\gradx p_r
+
\sum_{i=1}^{n_p-1} A_i(\s,T)\,\gradx s_i
+
B(\s,T)\,\gradx T,
\label{eq:pg_gradient_state_intro}
\end{equation}
where
\begin{equation}
A_i(\s,T)=
\frac{1}{\Lam(\s,T)}
\sum_{\alpha=1}^{n_p}
\lam_\alpha(\s,T)
\partial_{s_i}p_{c,\alpha}(\s,T),
\label{eq:A_intro}
\end{equation}
with $i=1,\ldots,n_p-1,$
and
\begin{equation}
B(\s,T)=
\frac{1}{\Lam(\s,T)}
\sum_{\alpha=1}^{n_p}
\lam_\alpha(\s,T)
\partial_T p_{c,\alpha}(\s,T).
\label{eq:B_intro}
\end{equation}

In this work, \emph{exactness} means exact equivalence between the
global-pressure formulation and the corresponding phase-pressure formulation at
the continuous Darcy-model level. The phase-pressure formulation uses the phase
pressures \(p_\alpha\), which are related to the chosen reference phase
pressure \(p_r\) by Eq.~\eqref{eq:palpha_pc_intro}. Thus the differences
between phase pressures are carried by the capillary-pressure functions
\(p_{c,\alpha}\).

The global-pressure reduction is exact on \(\mathcal U\) if the
mobility-weighted capillary contribution in
Eq.~\eqref{eq:pg_gradient_state_intro} can be represented as the spatial
gradient of a scalar state function. More precisely, we require a scalar
potential
\begin{equation}
\Pi\in C^1(\mathcal U),
\label{eq:Pi_space_intro}
\end{equation}
where \(C^1(\mathcal U)\) means that \(\Pi\) is continuously differentiable
with respect to the independent saturation variables and temperature on the
state domain \(\mathcal U\). Hence the derivatives
\(\partial_{s_i}\Pi\) and \(\partial_T\Pi\) exist and are continuous. Exactness
requires
\begin{equation}
\partial_{s_i}\Pi(\s,T)=A_i(\s,T),
\qquad
\partial_T\Pi(\s,T)=B(\s,T),
\label{eq:exactness_intro}
\end{equation}
with \(i=1,\ldots,n_p-1\). Equivalently, in the augmented state coordinates
\(y=(s_1,\ldots,s_{n_p-1},T)\),
\begin{equation}
\grady \Pi(\s,T)
=
\big(A_1(\s,T),\ldots,A_{n_p-1}(\s,T),B(\s,T)\big).
\label{eq:gradPi_intro}
\end{equation}

Combining Eq.~\eqref{eq:exactness_intro} with the chain rule gives the spatial
capillary contribution
\begin{equation}
\gradx[\Pi(\s(x,t),T(x,t))]
=
\sum_{i=1}^{n_p-1}A_i(\s,T)\,\gradx s_i
+
B(\s,T)\,\gradx T .
\label{eq:gradPi_spatial_intro}
\end{equation}
When this condition holds, the global pressure is related directly to the
reference phase pressure by
\begin{equation}
\pg=p_r+\Pi(\s,T).
\label{eq:pg_ansatz_general}
\end{equation}
Here \(p_r\) is the pressure of the chosen reference phase, while
\(\Pi(\s,T)\) is the scalar capillary correction generated by the local
saturation--temperature state. Taking the spatial gradient of
Eq.~\eqref{eq:pg_ansatz_general} and using
Eq.~\eqref{eq:gradPi_spatial_intro} reproduces exactly the
mobility-weighted phase-pressure contribution in
Eq.~\eqref{eq:pg_gradient_state_intro}. This is the sense in which the
global-pressure formulation and the phase-pressure formulation are equivalent.

The capillary correction is then path independent in the
saturation--temperature state domain. Consequently, the pressure-driven part of
the total Darcy flux obtained from \(\pg\) is identical to that obtained from
the phase pressures. When Eq.~\eqref{eq:exactness_intro} fails, \(\pg\) may
still be used as a reduced pressure variable, but the formulation is no longer
exactly equivalent to the phase-pressure system.

This integrability issue lies at the core of global-pressure theory. For two
phases, the construction is classical \cite{ChaventJaffre1986}. For three
phases and beyond, exact decoupling is more delicate: the Buckley--Leverett
structure may lose strict hyperbolicity, and exact global-pressure equivalence
requires constitutive compatibility between mobilities and capillary laws
\cite{Holden1990,ChenEwing1997,Chavent2009,DiChiara2010}. Our recent Global
Buckley--Leverett formulation for multicomponent fractured flow was built on
this isothermal backbone: it extended the global-pressure/fractional-flow
viewpoint to equation-of-state coupling, dynamic capillarity\cite{nfhq-z872}, and
fractured-media transport, while retaining TD/gTD compatibility as the
criterion for exact pressure reduction
\cite{tantardini2025globalbuckleyleverettmulticomponentflow}. The present work
retains that Darcy-scale framework but reexamines the exactness criterion when
temperature becomes an active state variable.

\subsection{Isothermal TD/gTD}

The isothermal theory is recovered by fixing $T$. The state domain then
reduces from $\Delta\times\Theta$ to the saturation simplex $\Delta$, and
the temperature term in \eqref{eq:pg_gradient_state_intro} is absent. The
capillary field is therefore
\begin{equation}
\mathbf A(\s)
=
\big(A_1(\s),\ldots,A_{n_p-1}(\s)\big),
\label{eq:A_vector_iso}
\end{equation}
with
\begin{equation}
A_i(\s)=
\frac{1}{\Lam(\s)}
\sum_{\alpha=1}^{n_p}
\lam_\alpha(\s)
\partial_{s_i}p_{c,\alpha}(\s),
\qquad
i=1,\ldots,n_p-1 .
\label{eq:A_iso}
\end{equation}
An exact isothermal global-pressure reduction exists if there is a potential
$\Pi\in C^1(\Delta)$ such that
\begin{equation}
\grads\Pi(\s)=\mathbf A(\s),
\label{eq:gradPi_A_iso}
\end{equation}
or, equivalently,
\begin{equation}
\partial_{s_i}\Pi(\s)=A_i(\s),
\qquad
i=1,\ldots,n_p-1 .
\label{eq:Pi_iso}
\end{equation}
In that case,
\begin{equation}
\pg=p_r+\Pi(\s)
\label{eq:pg_iso_exact}
\end{equation}
is exactly equivalent to the phase-pressure formulation.

On simply connected subsets of the saturation simplex, the existence of
$\Pi$ is equivalent to the mixed-partial compatibility condition
\begin{equation}
\partial_{s_j}A_i=\partial_{s_i}A_j,
\qquad
1\le i,j\le n_p-1.
\label{eq:gtd_iso}
\end{equation}
The simply connected assumption excludes topological obstructions to defining a
single-valued global potential. The compatibility condition
\eqref{eq:gtd_iso} gives local path independence; on a simply connected
saturation domain it also guarantees that the local potentials can be assembled
into one global scalar correction $\Pi$. In the present applications this is a
mild restriction, since the regularized interior of the saturation simplex is
simply connected. This condition is stated explicitly because residual-saturation cutoffs,
phase-disappearance regions, or excluded nonregular subsets may change the
topology of the admissible state domain.

Condition~\eqref{eq:gtd_iso} is the standard calculus form of the
isothermal TD  and its gTD extension \cite{Chavent2009,DiChiara2010}. It states
that the mobility-weighted capillary field is a gradient field on the
saturation simplex, so that the capillary correction is path independent and
can be absorbed into a scalar global-pressure potential.

\section{Nonisothermal exactness on the augmented state domain}
\label{sec:thermal_gtd}

Section~\ref{sec:gbl} shows that, in the isothermal theory, exact
global-pressure reduction is an integrability statement on the saturation
simplex: the mobility-weighted capillary field must be the gradient of a scalar
saturation potential. The same viewpoint remains correct in the nonisothermal setting, but the relevant state domain is now described by the augmented variables $(\s,T)$ rather than by the saturation simplex alone. Let $\cT=[T_{\min},T_{\max}]\subset\mathbb R$ denote the admissible temperature range; (we reserve $\mathcal T_f(b)$ for the fracture transmissivity). Because mobilities, densities, viscosities, and capillary laws may all depend on temperature, and because capillary pressure is not, in general, a purely saturation-dependent quantity from a thermodynamic viewpoint, exactness is no longer determined by the saturation sector alone \cite{Chavent2009,DiChiara2010,HassanizadehGray1993,Standnes2021,Weishaupt2021,Jurak2019,SheSleep1998}. Even if the classical TD/gTD condition holds on every fixed-temperature slice, a global state potential may still fail to exist once thermal gradients are present. The new obstruction is the compatibility of the temperature derivative of the candidate potential with the mobility-weighted capillary contribution generated by $\partial_T p_{c,\alpha}$. Nonisothermal exactness is therefore a single compatibility condition posed on $(\s,T)$, not merely a family of isothermal exactness statements indexed by temperature.

Accordingly, the relevant object is a mobility-weighted capillary field on the
augmented saturation--temperature state domain. Theorem~\ref{thm:thermal_gtd}
makes this precise and shows how the classical isothermal compatibility
relations are embedded in the enlarged nonisothermal structure.

The mathematical structure used here is deliberately standard. The
global-pressure correction is exact when the mobility-weighted capillary field
is generated by a scalar potential. Equivalently, the line integral of the
capillary differential in Eq.~\eqref{eq:dPi_augmented} is path independent in
the augmented state domain $(\mathcal U)$. What is specific to the present
nonisothermal problem is the identification of the capillary field appropriate
to temperature-dependent multiphase capillary data and the resulting mixed
compatibility condition. The saturation-sector conditions reproduce the
classical isothermal TD/gTD constraints, while the mixed condition
$(\partial_T A_i=\partial_{s_i}B)$ is the additional nonisothermal obstruction.

\subsection{Slice-wise versus full exactness}

Suppose that temperature varies in space and time and that one seeks a scalar
state potential of the form
\begin{equation}
\Pi=\Pi(\s,T).
\label{eq:Pi_sT}
\end{equation}
By the chain rule, already stated in Eq.~\eqref{eq:chain_rule_Pi_intro}, its
spatial gradient contains both saturation-gradient and temperature-gradient
contributions. This shows why fixed-temperature exactness is not sufficient:
besides reproducing the mobility-weighted capillary contribution in the
saturation directions, a nonisothermal state potential must also reproduce the
thermal contribution proportional to \(\gradx T\).

It is therefore useful to distinguish two notions.
\emph{Slice-wise exactness} means exactness at fixed temperature: for each $T\in\cT$ there exists a scalar potential $\Pi_T(\s)$ such that
\begin{equation}
A_i(\s,T)=\partial_{s_i}\Pi_T(\s),
\qquad i=1,\dots,n_p-1,
\end{equation}
on the saturation simplex.
\emph{Full nonisothermal exactness} is stronger: it requires a \emph{single}
scalar potential \(\Pi(\s,T)\) on the augmented state domain such that
\begin{equation}
\mathbf C(\s,T)=\nabla_{(\s,T)}\Pi(\s,T),
\label{eq:full_exact_C_gradient}
\end{equation}
with the componentwise meaning already given in
Eq.~\eqref{eq:exactness_intro}.
The difference between these notions is the mixed saturation--temperature compatibility condition
\(
\partial_T A_i=\partial_{s_i}B
\),
in addition to the saturation-sector relations
\(
\partial_{s_j}A_i=\partial_{s_i}A_j
\).
This is the genuinely new obstruction introduced by the nonisothermal setting.

\subsection{Augmented-state mobility-weighted capillary field}

To formulate the problem geometrically, we use the mobility-weighted
capillary coefficients \(A_i(\s,T)\) and \(B(\s,T)\) introduced in
Eqs.~\eqref{eq:A_intro} and \eqref{eq:B_intro}. The coefficients \(A_i\)
describe the capillary response in the independent saturation directions,
whereas \(B\) gives the corresponding response in the temperature direction.
Together they define the augmented
capillary field
\begin{equation}
\mathbf C(\s,T)
=
\big(A_1(\s,T),\ldots,A_{n_p-1}(\s,T),B(\s,T)\big).
\label{eq:C_augmented}
\end{equation}
The associated formal capillary differential increment is
\begin{equation}
d\Pi
=
\sum_{i=1}^{n_p-1}A_i(\s,T)\,\mathrm d s_i
+
B(\s,T)\,\mathrm d T .
\label{eq:dPi_augmented}
\end{equation}
This is the nonisothermal analogue of the isothermal TD/gTD capillary
differential. Exactness means that this differential increment is generated by
a scalar potential \(\Pi(\s,T)\), or equivalently that
\begin{equation}
\nabla_{(\s,T)}\Pi(\s,T)=\mathbf C(\s,T).
\label{eq:gradPi_C_augmented}
\end{equation}
In practical terms, the capillary correction is then path independent in the
saturation--temperature state domain.

\begin{definition}[Nonisothermal gTD condition]
We say that the constitutive data
\((\{\lambda_\alpha\},\{p_{c,\alpha}\})\) satisfy the
\emph{nonisothermal gTD condition} on a subset
\(\mathcal U\subset\Delta\times\cT\) if the augmented capillary field
\(\mathbf C(\s,T)\) is conservative on \(\mathcal U\), equivalently if it is
locally the gradient of a scalar potential.
\label{def:thermal_gtd}
\end{definition}

Restricted to a constant-temperature slice, the augmented capillary field
reduces to the classical saturation-domain capillary field. Full nonisothermal
exactness requires more: the compatibility conditions must hold on the full
augmented state domain, so that the slice-wise potentials are mutually
compatible through the temperature direction.

\subsection{Exactness criterion on the augmented state domain}

We now state the main structural result.
The notation $\grady\Pi=\mathbf C$ is used in the Euclidean coordinates
$y=(s_1,\ldots,s_{n_p-1},T)$ on the state domain. Equivalently, the associated
differential is
\begin{equation}
    \mathrm d\Pi
=
\sum_{i=1}^{n_p-1}A_i\,\mathrm d s_i
+
B\,\mathrm dT .
\end{equation}
Thus no additional geometric structure beyond these state coordinates is
assumed.

\begin{theorem}[Nonisothermal gTD exactness criterion]
\label{thm:thermal_gtd}
Let
\begin{equation}
\mathcal U\subset \operatorname{int}\Delta\times (T_{\min},T_{\max})
\end{equation}
be an open regular subset of the augmented state domain. Assume that, for each
phase $\alpha$, the functions $\lambda_\alpha$ and $p_{c,\alpha}$ are of class
$C^2$ in $(\s,T)$ on $\mathcal U$, and that
\begin{equation}
\Lam(\s,T)=\sum_{\alpha=1}^{n_p}\lambda_\alpha(\s,T)>0,
\qquad
\forall (\s,T)\in\mathcal U.
\label{eq:Lambda_positive}
\end{equation}
Then the following statements are equivalent:

\begin{enumerate}[label=(\roman*)]
\item There exists, locally on $\mathcal U$, a scalar potential
$\Pi(\s,T)$ such that
\begin{align}
\partial_{s_i}\Pi(\s,T)&=A_i(\s,T),
\qquad i=1,\ldots,n_p-1,
\label{eq:Pi_system_s}
\\
\partial_T\Pi(\s,T)&=B(\s,T).
\label{eq:Pi_system_T}
\end{align}

\item The augmented capillary field
\begin{equation}
\mathbf C(\s,T)=\big(A_1,\ldots,A_{n_p-1},B\big)
\end{equation}
is locally conservative on $\mathcal U$, equivalently locally a gradient field.

\item The compatibility relations
\begin{align}
\partial_{s_j}A_i(\s,T)&=\partial_{s_i}A_j(\s,T),
\qquad
1\le i,j\le n_p-1,
\label{eq:compat_ss}
\\
\partial_T A_i(\s,T)&=\partial_{s_i}B(\s,T),
\qquad
1\le i\le n_p-1,
\label{eq:compat_sT}
\end{align}
hold throughout $\mathcal U$.
\end{enumerate}

If, in addition, $\mathcal U$ is path connected and simply connected, then
$\Pi$ exists globally on $\mathcal U$ and is unique up to an additive constant.
With
\begin{equation}
\pg = p_r + \Pi(\s,T),
\label{eq:pg_exact}
\end{equation}
the capillary contribution is represented exactly by a state potential, and
the corresponding global-pressure formulation is exactly equivalent to the
phase-pressure formulation at the continuous Darcy-model level.
\end{theorem}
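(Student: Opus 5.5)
The plan is to reduce the theorem to the Poincaré lemma for closed $1$-forms in the Euclidean coordinates $y=(s_1,\ldots,s_{n_p-1},T)$. The first step is regularity. Since $\lambda_\alpha,p_{c,\alpha}\in C^2(\mathcal U)$ and $\Lam>0$ by \eqref{eq:Lambda_positive}, the quotients in \eqref{eq:A_intro}--\eqref{eq:B_intro} are well defined. The derivatives $\partial_{s_i}p_{c,\alpha}$ and $\partial_Tp_{c,\alpha}$ are $C^1$, and $\lambda_\alpha/\Lam$ is $C^2$, so every component of $\mathbf C$ is $C^1$ on $\mathcal U$. This is exactly the regularity under which mixed partials of a $C^2$ potential commute (Schwarz) and under which the Poincaré lemma applies to the form $\omega=\sum_i A_i\,\mathrm ds_i+B\,\mathrm dT$ from \eqref{eq:dPi_augmented}.

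The equivalences would then be proved in the order (i)$\Leftrightarrow$(ii), (i)$\Rightarrow$(iii), (iii)$\Rightarrow$(i).
\begin{enumerate}[label=(\alph*)]
\item (i)$\Leftrightarrow$(ii) is definitional: a local potential solving \eqref{eq:Pi_system_s}--\eqref{eq:Pi_system_T} is the statement $\grady\Pi=\mathbf C$, i.e. that $\mathbf C$ is locally a gradient, in the sense of Definition~\ref{def:thermal_gtd}.
\item For (i)$\Rightarrow$(iii), $\grady\Pi=\mathbf C\in C^1$ gives $\Pi\in C^2$ locally. Schwarz's theorem then gives $\partial_{s_j}A_i=\partial_{s_j}\partial_{s_i}\Pi=\partial_{s_i}A_j$, which is \eqref{eq:compat_ss}. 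Likewise $\partial_TA_i=\partial_{s_i}B$, which is \eqref{eq:compat_sT}. Since these identities are pointwise and every point has such a neighborhood, they hold throughout $\mathcal U$.
\item For (iii)$\Rightarrow$(i), fix $y_0\in\mathcal U$ and a ball (hence star-shaped) neighborhood $V\subset\mathcal U$. Define
\begin{equation}
\Pi(y)=\int_0^1 \mathbf C\big(y_0+t(y-y_0)\big)\cdot(y-y_0)\,\mathrm dt .
\end{equation}
Differentiate under the integral, which is justified by $\mathbf C\in C^1$. Then use the symmetry of the Jacobian of $\mathbf C$, which is exactly \eqref{eq:compat_ss}--\eqref{eq:compat_sT} combined. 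The integrand derivative becomes $\tfrac{\mathrm d}{\mathrm dt}\big[t\,\mathbf C(y_0+t(y-y_0))\big]$, so $\grady\Pi=\mathbf C$ on $V$.
\end{enumerate}

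For the global statement, assume $\mathcal U$ is path connected and simply connected. Define $\Pi(y)=\int_\gamma\omega$ along any piecewise-$C^1$ path $\gamma$ in $\mathcal U$ from a fixed base point $y_0$. Independence of the path is the main step I expect to require care. Two paths with the same endpoints form a loop, which is null-homotopic by simple connectivity. The line integral of a closed $C^1$ form is invariant under homotopy. I would prove this invariance by covering the compact image of the homotopy with finitely many balls on which local potentials exist by (c), and telescoping over a fine subdivision of the parameter square; this is the standard Lebesgue-number argument. Differentiating along straight segments near each point then shows $\grady\Pi=\mathbf C$. For uniqueness, two potentials have difference with vanishing gradient on a connected open set, so the difference is constant.

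Finally, for the equivalence with the phase formulation, set $\pg=p_r+\Pi(\s,T)$ for state fields $\s(x,t),T(x,t)$ regular enough for the chain rule. Eq.~\eqref{eq:chain_rule_Pi_intro} combined with $\grady\Pi=\mathbf C$ gives exactly \eqref{eq:gradPi_spatial_intro}. Substituting into \eqref{eq:pg_gradient_state_intro} shows that $\gradx\pg$ coincides with the mobility-weighted identity \eqref{eq:pg_identity}. Hence \eqref{eq:vt_global} reproduces $\vt=\sum_\alpha\valpha$ from the phase Darcy laws, and the phase pressures are recovered as $p_\alpha=\pg-\Pi+p_{c,\alpha}$. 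This gives a two-way exact correspondence at the continuous level.
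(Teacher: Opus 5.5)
Your proposal is correct and follows the same route as the paper: Schwarz symmetry of mixed partials for (i)$\Rightarrow$(iii), path-independent line integrals (the Poincar\'e lemma, extended to a global potential via simple connectivity, unique up to a constant by connectedness) for the converse, and the chain-rule substitution into the mobility-weighted identity for the equivalence with the phase-pressure formulation. You add details the paper leaves implicit: the $C^1$ regularity of $\mathbf C$ from the $C^2$ hypotheses and $\Lambda_t>0$, the explicit radial-integral potential on a ball, and the homotopy-invariance argument.
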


Equations \eqref{eq:compat_ss} recover the familiar saturation-sector integrability conditions from the isothermal theory. Equations \eqref{eq:compat_sT} are new: they couple neighboring temperature slices and enforce that the thermal component of the mobility-weighted capillary structure can be integrated consistently together with the saturation component. Full nonisothermal exactness requires both sets of relations.

\begin{proof}
Using the reference-phase decomposition
\eqref{eq:palpha_pc_intro} and its spatial derivative
\eqref{eq:phase_grad_state_intro}, substitution into
\eqref{eq:pg_identity} gives

\begin{widetext}
\begin{align}
\Lam(\s,T)\,\gradx \pg
&=
\Lam(\s,T)\,\gradx p_r
+
\sum_{i=1}^{n_p-1}
\left(
\sum_{\alpha=1}^{n_p}\lambda_\alpha(\s,T)\,\partial_{s_i}p_{c,\alpha}(\s,T)
\right)\gradx s_i
+
\left(
\sum_{\alpha=1}^{n_p}\lambda_\alpha(\s,T)\,\partial_T p_{c,\alpha}(\s,T)
\right)\gradx T.
\label{eq:proof_1}
\end{align}

Using the definitions \eqref{eq:A_intro} and \eqref{eq:B_intro}, this becomes
\begin{align}
\Lam(\s,T)\,\gradx \pg
&=
\Lam(\s,T)\,\gradx p_r
+
\Lam(\s,T)\sum_{i=1}^{n_p-1}A_i(\s,T)\,\gradx s_i
+
\Lam(\s,T)B(\s,T)\,\gradx T.
\label{eq:proof_2}
\end{align}
\end{widetext}

Hence exact equivalence with the ansatz
\begin{equation}
\pg=p_r+\Pi(\s,T)
\label{eq:proof_pg}
\end{equation}
holds if and only if the componentwise conditions
\eqref{eq:exactness_intro} hold, or equivalently
\(\grady\Pi=\mathbf C\).

The relations in item (i) state precisely that the augmented capillary field
\(\mathbf C=(A_1,\ldots,A_{n_p-1},B)\) is the gradient of the scalar potential
\(\Pi\). Therefore its mixed partial derivatives must be mutually compatible,
which gives \eqref{eq:compat_ss}--\eqref{eq:compat_sT}. Conversely, if these
compatibility relations hold on a sufficiently small state-domain neighborhood,
the line integral of \(\mathbf C\) is path independent and defines a local
scalar potential. If \(\mathcal U\) is simply connected, the same construction
gives a global potential on \(\mathcal U\), unique up to an additive constant.
This proves the equivalence of (i)--(iii), and therefore the exactness claim.
\end{proof}

The criterion is applied on regular open subsets of the admissible
saturation--temperature state domain, where the mobilities and
capillary-pressure derivatives are smooth and the total mobility satisfies
$\Lambda_t>0$. This is the natural domain for the mixed-partial compatibility argument.
At the simplex boundary, where one or more
phase saturations approach zero, endpoint relative permeabilities may vanish
and capillary-pressure derivatives may become singular or degenerate. These
boundary regimes require the usual phase-appearance restrictions,
residual-saturation cutoffs, or regularized capillary laws. In the numerical
examples, the diagnostics are therefore evaluated on an interior regularized
simplex, rather than directly on degenerate saturation edges.

\subsection{Interpretation, isothermal limit, and trajectory-wise exactness}

Theorem~\ref{thm:thermal_gtd} separates two levels of structure. At the
algebraic level, the mobility-weighted capillary contribution can always be
written as the augmented field \(\mathbf C(\s,T)\). At the differential level,
it corresponds to a genuine scalar potential only when \(\mathbf C\) is a
conservative field. This is the precise sense in which global pressure is an
exact reduction rather than a convenient approximation.
If \(T\) is fixed, then \(\mathrm d T=0\), and the augmented capillary field
reduces to the saturation-domain capillary field. The mixed conditions
\eqref{eq:compat_sT} disappear, and Theorem~\ref{thm:thermal_gtd} reduces
exactly to the usual isothermal TD/gTD criterion
\cite{Chavent2009,DiChiara2010}.

Taken together, these remarks show that the nonisothermal criterion is a strict extension of TD/gTD: it retains the saturation-sector conditions and adds the mixed relations
\(
\partial_T A_i=\partial_{s_i}B
\),
which enforce compatibility between the slice-wise potentials as \(T\) varies. Exactness can therefore hold on every fixed-temperature slice and still fail for genuinely nonisothermal evolutions.

\begin{corollary}[Exactness along a thermodynamic trajectory]
\label{cor:trajectory}
Let \(\Theta\subset\mathbb R\) be an open temperature interval and let
\(\gamma:[0,t_f]\to \Delta\times\Theta\) be a smooth thermodynamic path
\(t\mapsto(\s(t),T(t))\). Let \(\mathbf y=(\s,T)\).
If the augmented capillary field \(\mathbf C\) is conservative on an open
neighborhood of \(\gamma([0,t_f])\), so that
\(\mathbf C=\nabla_{\mathbf y}\Pi\) there, then for every \(t\in[0,t_f]\),
\begin{equation}
\Pi(\gamma(t))-\Pi(\gamma(0))
=
\int_{\gamma|_{[0,t]}}\mathbf C\cdot \mathrm d\mathbf y .
\label{eq:trajectory_integral}
\end{equation}
For a generic path \(\gamma(\tau)=(\s(\tau),T(\tau))\), the line integral is
\begin{equation}
\int_\gamma \mathbf C\cdot \mathrm d\mathbf y
:=
\int_0^1
\mathbf C\big(\s(\tau),T(\tau)\big)
\cdot
\big(\dot{\s}(\tau),\dot T(\tau)\big)
\,\mathrm d\tau .
\label{eq:path_integral_explicit}
\end{equation}
Equivalently, the line integral depends only on the endpoints of the
restricted path segment \(\gamma|_{[0,t]}\) and not on its parametrization.
If global exactness fails on \(\Delta\times\Theta\) but holds on a simply
connected neighborhood of \(\gamma([0,t_f])\), then the global-pressure
reduction remains exact along that trajectory.
\end{corollary}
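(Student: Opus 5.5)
The corollary is essentially the fundamental theorem of calculus for line integrals, applied to the potential supplied by Theorem~\ref{thm:thermal_gtd}.

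\textbf{Step 1: set up a local potential along the path.} Since $\gamma([0,t_f])$ is compact, we work on an open neighborhood $\mathcal V$ of it on which $\mathbf C=\nabla_{\mathbf y}\Pi$ with $\Pi\in C^1(\mathcal V)$. This is either the hypothesis directly, or the conclusion of Theorem~\ref{thm:thermal_gtd} when the compatibility relations \eqref{eq:compat_ss}--\eqref{eq:compat_sT} hold and $\mathcal V$ is simply connected.

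\textbf{Step 2: differentiate along the path.} Fix $t\in[0,t_f]$ and consider the scalar function $\phi(\tau)=\Pi(\s(\tau),T(\tau))$. By the chain rule, in the same form as Eq.~\eqref{eq:chain_rule_Pi_intro} but in the path parameter instead of space,
\begin{equation*}
\phi'(\tau)=\sum_{i}\partial_{s_i}\Pi\,\dot s_i+\partial_T\Pi\,\dot T=\mathbf C(\gamma(\tau))\cdot\dot\gamma(\tau).
\end{equation*}
The integrand is continuous because $\gamma$ is smooth and $\mathbf C$ is continuous.

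\textbf{Step 3: integrate.} Integrating $\phi'$ from $0$ to $t$ gives Eq.~\eqref{eq:trajectory_integral}, read with the explicit formula \eqref{eq:path_integral_explicit}. In that formula the normalized parameter interval $[0,1]$ is simply the reparametrized version of $[0,t]$.

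\textbf{Step 4: endpoint dependence and reparametrization invariance.} The right-hand side $\Pi(\gamma(t))-\Pi(\gamma(0))$ involves only the endpoints. So any other $C^1$ path inside $\mathcal V$ with the same endpoints gives the same value. A reparametrization $\tau=\sigma(u)$ with $\sigma'>0$ leaves the integral unchanged by the substitution rule, which can also be seen directly from the previous sentence.

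\textbf{Step 5: the final claim.} Suppose exactness holds only on a simply connected neighborhood $\mathcal V$ of the trajectory. Theorem~\ref{thm:thermal_gtd}, applied with $\mathcal U=\mathcal V$, produces a single-valued potential $\Pi$ on $\mathcal V$. The state $(\s(x,t),T(x,t))$ stays in $\mathcal V$, so $\pg=p_r+\Pi(\s,T)$ is well defined. The chain rule \eqref{eq:gradPi_spatial_intro} then reproduces the mobility-weighted capillary terms in \eqref{eq:pg_gradient_state_intro} exactly, which is exact equivalence along that trajectory.

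The only delicate point is this last step. The path must lie in a region where $\Pi$ is single-valued; otherwise the potentials defined on overlapping local neighborhoods could differ by nonzero periods around non-contractible loops. I would handle this by requiring simple connectivity of $\mathcal V$, as the statement already does.

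A secondary subtlety concerns what "exact along the trajectory" means when the state varies in space. It must be read as: the full range of states $(\s(x,t),T(x,t))$ attained in the region of interest stays inside $\mathcal V$. It is not enough for a single curve $\gamma$ to lie in $\mathcal V$, and I would state this interpretation explicitly.
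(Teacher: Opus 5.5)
Your proposal is correct and takes essentially the paper's route. The paper gives no separate proof of Corollary~\ref{cor:trajectory}; it treats the result as immediate from Theorem~\ref{thm:thermal_gtd} (a single-valued potential on a simply connected neighborhood) together with the fundamental theorem of calculus for line integrals, which is what your Steps~2--5 carry out, and your remark that the whole set of visited states must lie in that neighborhood matches the paper's follow-up comment about exactness on the ``dynamically visited region.''
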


This corollary is especially useful in applications, where a simulation
typically explores only a restricted portion of the full state domain. Thus, even if nonisothermal exactness fails globally, the reduction may remain exact on the dynamically visited region. Conversely, a moving thermal front may push the solution trajectory across the boundary between exact and nonexact regions.

\subsection{Defect diagnostics}
\label{sec:defect_diagnostics}

Theorem~\ref{thm:thermal_gtd} provides an exact criterion for
global-pressure equivalence. In applications, however, the functions entering
this criterion are not usually known analytically from first principles. They
are prescribed as closure laws, fitted correlations, or tabulated data for the
phase mobilities $(\lambda_\alpha(\s,T))$ and capillary pressures
$(p_{c,\alpha}(\s,T))$. The resulting solution also explores only a subset of
the full saturation--temperature state domain. We therefore introduce
operational defect measures that identify where exactness holds or fails,
separate saturation-sector nonintegrability from genuinely nonisothermal
coupling, and provide scalar diagnostics that can be tracked along solution
trajectories.

\medskip

\noindent\textbf{(i) Saturation-sector defect.}
The first component measures failure of saturation-sector integrability,
\begin{equation}
\mathcal D_{ss}(\s,T)
=
\max_{i,j}
\left|
\partial_{s_j}A_i(\s,T)-\partial_{s_i}A_j(\s,T)
\right|.
\label{eq:defect_ss}
\end{equation}
For fixed $T$, $\mathcal D_{ss}$ reduces to the classical TD/gTD integrability defect on the saturation simplex.

\medskip

\noindent\textbf{(ii) Mixed saturation--temperature defect.}
The second component measures failure of the mixed compatibility relations,
\begin{equation}
\mathcal D_{sT}(\s,T)
=
\max_i
\left|
\partial_T A_i(\s,T)-\partial_{s_i}B(\s,T)
\right|.
\label{eq:defect_sT}
\end{equation}
This quantity has no isothermal counterpart: it vanishes identically when $T$ is frozen. It measures the specific obstruction to assembling the slice-wise potentials $\Pi_T(\s)$ into a single augmented-state potential $\Pi(\s,T)$.

\medskip

\noindent\textbf{(iii) Combined defect and usage.}
A convenient scalar indicator of nonisothermal gTD failure is
\begin{equation}
\mathcal D_{\mathrm{gTD}}(\s,T)
=
\max\{\mathcal D_{ss}(\s,T),\,\mathcal D_{sT}(\s,T)\}.
\label{eq:full_defect}
\end{equation}
In what follows, $\mathcal D_{\mathrm{gTD}}$ is used in three ways:
\begin{enumerate}
\item \emph{State-domain maps:} by plotting $\mathcal D_{ss}$ and $\mathcal D_{sT}$ on $\Delta$ across temperatures, we identify regions where global-pressure exactness is expected to hold and where it must fail.
\item \emph{Trajectory diagnostics:} evaluating $\mathcal D_{\mathrm{gTD}}(\s(x,t),T(x,t))$ along simulated trajectories quantifies when a moving thermal front drives the system into nonexact regimes and whether the loss of exactness is primarily saturation-driven or thermal-coupling-driven.
\item \emph{Model-control input:} in the numerical section we use $\mathcal D_{\mathrm{gTD}}$ to classify regimes (exact / weakly nonexact / strongly nonexact) and to interpret the behavior of the conservative surrogate introduced later.
\end{enumerate}

\noindent
These diagnostics do not replace Theorem~\ref{thm:thermal_gtd}; they provide practical summaries of its compatibility conditions on the subset of $(\s,T)$ actually explored by a model or simulation.

\section{Minimal nonisothermal matrix--fracture system with evolving aperture}
\label{sec:matrix_fracture}

Theorem~\ref{thm:thermal_gtd} is a structural result: it characterizes when a nonisothermal global-pressure representation is exactly equivalent to the phase-pressure formulation on the augmented state domain. To assess the practical relevance of that result, we embed it in a minimal conservative matrix--fracture model in which saturation and temperature evolve and fracture transmissivity changes under thermo-hydraulic forcing. The goal is not to construct a full geothermal simulator, but to introduce the smallest reduced system that contains the ingredients needed to connect augmented-state exactness to fractured nonisothermal flow. Reduced fluid-and-heat models of this type are standard in fractured-media and geothermal applications \cite{PruessNarasimhan1985,Wang2021FracturedGeothermal}.

The model contains four coupled parts: a pressure subsystem determining the total flux in the matrix and in the fracture, a saturation subsystem describing advective--capillary phase redistribution, a thermal subsystem accounting for advection, conduction, and matrix--fracture heat exchange, and an aperture subsystem through which pressure and temperature modify fracture transmissivity. Each part is retained only because it is needed to move the local state through the augmented domain $(\s,T)$ and to test whether the exact global-pressure structure is preserved or lost.

Two simplifying choices are made from the outset. First, the fracture is
represented as a lower-dimensional domain embedded in the porous matrix. This
is the standard mixed-dimensional reduction when the fracture aperture is small
relative to the matrix length scale, while the fracture remains hydraulically
important \cite{Berre2019,Ahmed2017,Wang2021FracturedGeothermal}. Second, the
thermal model is written under local thermal equilibrium (LTE), so that each
continuum carries a single temperature variable. This keeps the state
description compact and isolates the role of the augmented state domain. LTE is
used here as a reduced closure, not as a universal assumption: local thermal
nonequilibrium can become important in heterogeneous porous media,
fracture-dominated systems, high-velocity flow, systems with large thermal
property contrasts or interfacial resistance, and near sharp thermal fronts
\cite{Hamidi2019,Heinze2024,Kostelecky2026LTNE}.

\subsection{Geometry and unknowns}

Let $\Omega\subset\mathbb R^d$ denote the porous matrix domain and let $\Gamma\subset\Omega$ denote a lower-dimensional fracture domain. 
Here \(\nabla_{\Gamma}\) and \(\nabla_{\Gamma}\!\cdot\) denote,
respectively, the tangential gradient and tangential divergence along the
fracture manifold \(\Gamma\). For a scalar field \(u\) defined on
\(\Gamma\), \(\nabla_{\Gamma}u\) is the projection of the full spatial
gradient onto the tangent plane of the fracture; correspondingly,
\(\nabla_{\Gamma}\!\cdot\mathbf q_\Gamma\) is the divergence of a tangential
flux \(\mathbf q_\Gamma\) along \(\Gamma\).
Superscripts $m$ and $f$ refer to matrix and fracture quantities, respectively. The minimal unknown set is
\[
(\pg^m,\pg^f),\quad
(\s^m,\s^f),\quad
(T^m,T^f),\quad
b=b(x,t)\ \text{on }\Gamma,
\]
where $b$ is the hydraulic aperture.

The variables $\pg^m$ and $\pg^f$ are the reduced pressure variables driving total flow in the two continua. The saturation vectors $\s^m$ and $\s^f$ describe the local phase configuration, $T^m$ and $T^f$ determine the thermal state entering mobilities, densities, and capillary laws, and $b$ is the internal variable through which thermo-hydraulic forcing modifies fracture transmissivity. This is the smallest state vector needed to couple nonisothermal constitutive evolution to matrix--fracture transport and aperture-controlled conductivity.

\subsection{Matrix and fracture Darcy laws}

In the matrix, phase velocities obey
\begin{equation}
\valpha^{m}
=
-\K\,
\lam_\alpha^{m}(\s^m,T^m)
\Big(
\gradx p_\alpha^{m}-\rho_\alpha(T^m)\,g\,\gradx z
\Big),
\label{eq:darcy_matrix}
\end{equation}
Here $\K$ is the matrix permeability tensor. On the fracture domain the phase velocities obey the Darcy law along the fracture
\begin{equation}
\valpha^{f}
=
-\mathcal T_f(b)\,
\lam_\alpha^{f}(\s^f,T^f)
\Big(
\gradG p_\alpha^{f}-\rho_\alpha(T^f)\,g\,\gradG z
\Big).
\label{eq:darcy_fracture}
\end{equation}
Here $\gradG$ denotes the gradient along the fracture, and
$\mathcal T_f(b)$ is the fracture transmissivity. The total mobilities are
\begin{equation}
\Lam^\kappa(\s^\kappa,T^\kappa)=\sum_{\alpha=1}^{n_p}\lam_\alpha^\kappa(\s^\kappa,T^\kappa),
\qquad
\kappa\in\{m,f\}.
\label{eq:total_mobs_kappa}
\end{equation}

The matrix is treated as a full-dimensional porous continuum, whereas the fracture acts as a lower-dimensional conductive pathway carrying flow along the fracture. Because the fracture is typically more conductive than the matrix, injection and cooling may drive the fracture and matrix across different regions of the augmented state domain \cite{Berre2019,Wang2021FracturedGeothermal}.

The fracture transmissivity is taken in cubic-law form,
\begin{equation}
\mathcal T_f(b)=\frac{b^3}{12},
\label{eq:cubic_law}
\end{equation}
with roughness and shape corrections absorbed into $b$ or, if needed, into a multiplicative factor. Here $b=b(x,t)$ denotes the \emph{hydraulic aperture} on $\Gamma$, namely the effective aperture parameter yielding the same laminar throughflow as the actual rough-walled fracture under the reduced Darcy law along the fracture. Equation~\eqref{eq:cubic_law} is the classical baseline constitutive relation for aperture-controlled fracture conductivity \cite{Witherspoon1980}. Its role here is structural: because $\mathcal T_f$ depends cubically on $b$, even modest aperture changes can induce large conductivity changes and therefore large shifts in the dynamically sampled state trajectory $(\s,T)$.

\subsection{Global pressures in matrix and fracture}

The global-pressure construction is introduced separately in the matrix and in the fracture:
\begin{align}
\sum_{\alpha=1}^{n_p}
\lam_\alpha^m(\s^m,T^m)\,\gradx p_\alpha^m
&=
\Lam^m(\s^m,T^m)\,\gradx \pg^m,
\label{eq:pg_matrix}
\\
\sum_{\alpha=1}^{n_p}
\lam_\alpha^f(\s^f,T^f)\,\gradG p_\alpha^f
&=
\Lam^f(\s^f,T^f)\,\gradG \pg^f.
\label{eq:pg_fracture}
\end{align}
These identities do not add new physics; they reorganize the multiphase pressure coupling into a reduced form. Whenever the nonisothermal gTD condition of Theorem~\ref{thm:thermal_gtd} holds continuum-wise, the corresponding global pressures can be written as state potentials,
\begin{equation}
\pg^\kappa = p_r^\kappa + \Pi^\kappa(\s^\kappa,T^\kappa),
\qquad
\kappa\in\{m,f\}.
\label{eq:pg_kappa_exact}
\end{equation}

Treating the matrix and the fracture separately is essential. Because the fracture responds more rapidly to cooling and injection, the fracture state may enter a nonexact region of $(\s,T)$ while the matrix remains in an exact one, or conversely. The continuum-by-continuum formulation is therefore the natural setting in which exactness can be lost or recovered differently in the two continua.

\subsection{Pressure equations}

At the reduced level, the total Darcy laws are
\begin{align}
\mathbf v_t^{m}
&=
-\mathbf K\,\Lambda_t^m(\mathbf s^m,T^m)
\left(
\nabla p_g^m
-
\rho_t^m(\mathbf s^m,T^m)\,g\,\nabla z
\right),
\label{eq:vt_m}
\\
\mathbf v_t^{f}
&=
-\mathcal T_f(b)\,\Lambda_t^f(\mathbf s^f,T^f)
\left(
\gradG p_g^f
-
\rho_t^f(\mathbf s^f,T^f)\,g\,\gradG z
\right).
\label{eq:vt_f}
\end{align}
where the mobility-weighted total densities are defined by
\begin{equation}
\rho_t^\kappa(\s^\kappa,T^\kappa)
=
\frac{1}{\Lam^\kappa(\s^\kappa,T^\kappa)}
\sum_{\alpha=1}^{n_p}
\lam_\alpha^\kappa(\s^\kappa,T^\kappa)\,\rho_\alpha(T^\kappa),
\label{eq:rho_total_weighted}
\end{equation}
for $\kappa\in\{m,f\}$. These are supplemented by the total mass balances
\begin{align}
\divv \vt^m &= q_t^m,
\label{eq:mass_total_m}
\\
\divG \vt^f &= q_t^f + \cQ_{mf},
\label{eq:mass_total_f}
\end{align}
where $\mathcal Q_{mf}$ denotes the conservative matrix--fracture volumetric
exchange written explicitly in the lower-dimensional fracture balance. The
opposite contribution is enforced on the matrix side through the corresponding
interface flux condition.

Equations~\eqref{eq:vt_m}--\eqref{eq:mass_total_f} define the pressure
subsystem of the reduced model. They are used here as a quasi-steady
incompressible total-pressure closure: for each saturation, temperature, and
aperture field, the pressure solve determines the corresponding instantaneous
total fluxes in the matrix and fracture continua. Total compressive storage is
therefore not included in this pressure subsystem. Aperture-dependent storage,
however, is retained explicitly in the conservative fracture saturation and
energy balances. A fully compressible extension would add the corresponding
total-storage terms, including aperture-change contributions, to the pressure
equation.

At this level, the formulation is neutral as to whether the global pressure is
an exact state potential or a projected surrogate. That distinction enters
through the constitutive treatment of the mobility-weighted capillary
contribution.

\subsection{Saturation transport}

For $i=1,\dots,n_p-1$, let
\begin{equation}
f_i^\kappa=\frac{\lam_i^\kappa}{\Lam^\kappa},
\qquad
\kappa\in\{m,f\},
\label{eq:fractional_flow_kappa}
\end{equation}
denote the continuum-wise fractional-flow coefficients.

Because the capillary pressures depend on both saturation and temperature, the
relative phase flux contains both saturation-capillary and thermal-capillary
redistribution terms. In the matrix, we write
\begin{align}
\mathbf j_{i,\mathrm{rel}}^m
&=
-\K\,\lam_i^m
\sum_{j=1}^{n_p-1}
\left(
\partial_{s_j}p_{c,i}^m-A_j^m
\right)
\gradx s_j^m
\nonumber \\
&\quad
-\K,\lam_i^m
\left(
\partial_T p_{c,i}^m-B^m
\right)
\gradx T^m
\nonumber \\
&\quad
+\K,\lam_i^m
\left(
\rho_i-\rho_t^m
\right)
g\,\gradx z .
\label{eq:jrel_matrix}
\end{align}
On the fracture, the corresponding lower-dimensional relative flux is
\begin{align}
\mathbf j_{i,\mathrm{rel}}^f
&=
-\mathcal T_f(b)\,\lam_i^f
\sum_{j=1}^{n_p-1}
\left(
\partial_{s_j}p_{c,i}^f-A_j^f
\right)
\gradG s_j^f
\nonumber \\
&\quad
-\mathcal T_f(b),\lam_i^f
\left(
\partial_T p_{c,i}^f-B^f
\right)
\gradG T^f
\nonumber \\
&\quad
+\mathcal T_f(b),\lam_i^f
\left(
\rho_i-\rho_t^f
\right)
g\,\gradG z .
\label{eq:jrel_fracture}
\end{align}
The first terms in Eqs.~\eqref{eq:jrel_matrix} and
\eqref{eq:jrel_fracture} are the usual saturation-capillary redistribution
terms. The second terms are the thermal-capillary redistribution terms induced
by the temperature dependence of the capillary pressures. The last terms are
gravity-segregation contributions. In horizontal reduced benchmarks, or when
density contrasts are neglected, the gravity terms are set to zero.

The conservative saturation balances are
\begin{align}
\partial_t\left(\poro^m s_i^m\right)
+
\divv\left(
f_i^m(\s^m,T^m)\,\vt^m
+
\mathbf j_{i,\mathrm{rel}}^m
\right)
&=
q_i^m+\Tmf_i,
\label{eq:sat_m}
\\
\partial_t\left(b\,\poro^f s_i^f\right)
+
\divG\left(
f_i^f(\s^f,T^f)\,\vt^f
+
\mathbf j_{i,\mathrm{rel}}^f
\right)
&=
q_i^f-\Tmf_i.
\label{eq:sat_f}
\end{align}
Here $\Tmf_i$ denotes the conservative matrix--fracture exchange term for the
$i$th independent saturation variable. The sign convention is such that
$\Tmf_i>0$ transfers that variable from the fracture to the matrix.

Equations~\eqref{eq:sat_m}--\eqref{eq:sat_f} have a two-part transport
structure. The first part is advection by the total flux through the
fractional-flow coefficients. The second part is the relative redistribution
flux $\mathbf j_{i,\mathrm{rel}}^\kappa$, which contains saturation-capillary
redistribution, thermal-capillary redistribution, and, when retained,
gravitational segregation. When nonisothermal gTD exactness holds, the
capillary part of this redistribution is consistent with the scalar state
potential $\Pi^\kappa(\s^\kappa,T^\kappa)$. When exactness fails, the balance
laws remain conservative, but the capillary contribution is interpreted through
the projection-based surrogate introduced in Sec.~\ref{sec:projection}.

The fracture transport equation differs from its matrix counterpart because
transport occurs along the fracture and because the fracture storage is
aperture-integrated. Here $\poro^m$ and $\poro^f$ are taken as constant
porosities in the matrix and fracture continua, respectively. Thus the
difference between Eqs.~\eqref{eq:sat_m} and \eqref{eq:sat_f} is not a
time-dependent fracture porosity, but the evolving aperture factor in the
reduced fracture storage. The matrix storage is $\poro^m s_i^m$, whereas the
fracture storage is $b\poro^f s_i^f$. For constant $\poro^f$, the
aperture-integrated fracture storage satisfies
\begin{equation}
\partial_t\left(b\,\poro^f s_i^f\right)
=
b\,\poro^f\,\partial_t s_i^f
+
\poro^f s_i^f\,\partial_t b .
\label{eq:aperture_storage_expansion}
\end{equation}
Thus aperture evolution affects the fracture balance through the storage factor
$b\poro^f s_i^f$, while $\poro^f$ itself is kept constant.

\subsection{Energy equations and matrix--fracture heat exchange}

We adopt local thermal equilibrium (LTE) as the minimal thermal closure. Under this assumption, the solid skeleton and the fluid phases within each continuum share a single temperature, so the energy balance is written with one thermal state variable in the matrix and one on the fracture. This choice is made to isolate the effect of augmented-state exactness rather than to resolve subcontinuum thermal lag. It should nevertheless be read as a reduced closure: LTNE effects may become relevant in heterogeneous media, in fracture-dominated systems, and near sharp thermal fronts \cite{Hamidi2019,Heinze2024}.

A minimal matrix energy balance is
\begin{widetext}
\begin{equation}
\partial_t U^m(\s^m,T^m)
+
\divv\!\left(
\sum_{\alpha=1}^{n_p}
h_\alpha(T^m)\rho_\alpha(T^m)\,\valpha^m
-
k_T^m\,\gradx T^m
\right)
=
Q_T^m+\cH_{mf},
\label{eq:energy_m}
\end{equation}
while the fracture energy balance is written in aperture-integrated form,
\begin{equation}
\partial_t\!\big(
b\,U^f(\s^f,T^f)
\big)
+
\divG\!\left(
\sum_{\alpha=1}^{n_p}
h_\alpha(T^f)\rho_\alpha(T^f)\,\valpha^f
-
k_T^f\,\gradG T^f
\right)
=
Q_T^f-\cH_{mf}.
\label{eq:energy_f}
\end{equation}
\end{widetext}

Here $U^\kappa$ denotes the continuum internal-energy density, $h_\alpha$ the phase enthalpies, and $k_T^\kappa$ the effective thermal conductivities. In the fracture equation, \(k_T^f\) is understood as an aperture-integrated effective thermal conductivity along the fracture on the reduced fracture domain. The terms $\cH_{mf}$, like $\cQ_{mf}$ and $\Tmf_i$, are conservative exchange closures between matrix and fracture; only the thermal exchange term is specialized below.

These balances combine advection of enthalpy by the phase fluxes, thermal conduction within each continuum, and heat exchange between matrix and fracture. A simple reduced closure for the latter is Newton cooling,
\begin{equation}
\cH_{mf}=\eta_h\big(T^f - T^m|_\Gamma\big),
\label{eq:newton_cooling}
\end{equation}
with $\eta_h$ an inter-continuum heat-transfer coefficient. With this sign convention, \(\cH_{mf}>0\) corresponds to net heat transfer from fracture to matrix. More elaborate aperture- and flow-dependent heat-transfer coefficients are also available \cite{Heinze2017HeatCoeff}. For the present purpose, however, \eqref{eq:newton_cooling} is sufficient to retain the key coupling: fracture and matrix temperatures do not evolve independently, so cooling in the fracture perturbs the local exactness condition in both continua.

The LTE assumption is a closure choice, not a universal property of fractured
thermal flow. The LTE closure is appropriate when interphase and solid--fluid heat exchange
are rapid compared with advective thermal-front propagation. Its accuracy may
decrease near sharp thermal fronts, in high-velocity fractures, in strongly
heterogeneous matrix blocks, in systems with large thermal-property contrasts
or interfacial thermal resistance, or when fracture and matrix temperatures
exhibit appreciable lag \cite{Kostelecky2026LTNE}.
In such regimes, a local thermal
nonequilibrium model with separate fluid, solid, matrix, and/or fracture
temperatures would be more appropriate. The exactness theory would then extend
to a larger augmented state domain containing these additional temperature
variables, with additional mixed compatibility conditions.

\subsection{Reduced thermo-hydro-mechanical aperture law}

The final ingredient is the evolution of fracture aperture. In geothermal settings, fracture opening and closure are affected by both hydraulic pressure and thermal stress. Thermo-poroelastic analyses of heat extraction show that cooling, leak-off, and stress redistribution can alter fracture aperture appreciably, thereby modifying fracture conductivity and heat-transfer performance \cite{Ghassemi2008,Pandey2017THM}. To capture this feedback without introducing a full contact-mechanics model, we use a reduced thermo-hydro-mechanical closure.

We write the effective normal stress as
\begin{equation}
\sigma_n'=\sigma_n-\alpha_B\,\bar p_f,
\label{eq:effective_stress}
\end{equation}
where $\sigma_n$ is the imposed normal stress, $\alpha_B$ is a Biot-type coefficient, and $\bar p_f$ is a local representative fracture pressure entering the reduced stress law, for instance \(p_r^f\), \(\pg^f\), or a suitable local average. A thermoelastic stress shift is modeled by
\begin{equation}
\sigma_{th}(T^f)=K_{th}(T^f-T_0).
\label{eq:thermal_shift}
\end{equation}
We then prescribe the reduced aperture evolution law
\begin{equation}
\partial_t b
=
\frac{1}{\tau_b}
\Big(
b_{\mathrm{eq}}(\sigma_n'+\sigma_{th}(T^f)) - b
\Big)
+
\eta_q\,\norm{\vt^f}^2.
\label{eq:b_evolution}
\end{equation}

The first term in \eqref{eq:b_evolution} describes relaxation toward a stress-controlled equilibrium aperture. The second term is introduced here as a reduced rate-dependent dilation surrogate feeding back from fracture flow intensity to transmissivity. It is not intended as a detailed friction or contact law.

A convenient monotone stress--aperture closure is
\begin{equation}
b_{\mathrm{eq}}(\Sigma)
=
b_{\mathrm{res}}
+
\frac{b_0-b_{\mathrm{res}}}{1+\Sigma/\Sigma_*},
\qquad
\Sigma=\sigma_n'+\sigma_{th}(T^f),
\label{eq:bb_like}
\end{equation}
which reproduces the qualitative monotone closure behavior often used in reduced stress-sensitive fracture-flow models \cite{Bisdom2016}.
Together, \eqref{eq:cubic_law}, \eqref{eq:effective_stress},
\eqref{eq:thermal_shift}, and \eqref{eq:b_evolution} close the reduced
thermo-hydro-mechanical feedback loop. The variables involved are the fracture
temperature $T^f$, the thermally induced normal stress
$\sigma_{th}(T^f)$, the effective normal stress $\sigma_n'$, the hydraulic
aperture $b$, the fracture transmissivity $\mathcal T_f(b)$, the fracture
total Darcy velocity $\vt^f$, and the representative fracture pressure
$\bar p_f$:
\begin{equation}
T^f
\leftrightarrow
\sigma_{th}(T^f)
\leftrightarrow
\sigma_n'
\leftrightarrow
b
\leftrightarrow
\mathcal T_f(b)
\leftrightarrow
\vt^f
\leftrightarrow
\bar p_f .
\end{equation}

This loop is the application-side reason the problem is dynamically interesting. Thermal perturbations modify stress and aperture; aperture modifies transmissivity; transmissivity modifies fracture flux and hence the saturation and temperature trajectories; and those trajectories determine whether the local nonisothermal gTD condition is satisfied. Aperture evolution is therefore not an ancillary constitutive detail, but one of the mechanisms by which the solution may move between exact and nonexact global-pressure regimes.
The aperture law above is intentionally reduced. It is not meant to replace full thermo-hydro-mechanical fracture models, rough-walled contact formulations, or frictional slip analyses. Its role is narrower: it provides the simplest feedback by which thermal forcing and fracture flow modify transmissivity and thereby move the local state through regions where the nonisothermal gTD condition holds or fails.

\section{Projection-based conservative closure beyond exactness}
\label{sec:projection}

Sections~\ref{sec:thermal_gtd} and \ref{sec:matrix_fracture} identify two qualitatively distinct regimes. When the nonisothermal gTD condition holds, the global-pressure reduction is exact. When the evolving state $(\s(x,t),T(x,t))$ enters a region of the augmented state domain where the compatibility relations of Theorem~\ref{thm:thermal_gtd} fail, no scalar potential $\Pi(\s,T)$ exists such that $\mathbf C=\nabla_{(\s,T)}\Pi$, and exact equivalence with the phase-pressure formulation is lost. In that regime, one may either revert to the phase-pressure formulation or retain the global-pressure viewpoint through a reduced closure. We adopt the second option and seek a surrogate that preserves the conservative structure of the reduced balance laws while making the loss of exactness explicit.

The key observation is that nonexactness has two distinct components. One is the failure of integrability within each fixed-temperature saturation slice. The other is the genuinely nonisothermal incompatibility encoded by the mixed conditions
\(
\partial_T A_i=\partial_{s_i}B
\).
The closure introduced below addresses only the first component: it replaces the saturation-sector field by its best gradient approximation on each isothermal slice, while leaving the mixed nonisothermal defect explicit. In this way, the reduced model remains conservative on each slice without claiming restoration of full nonisothermal exactness \cite{Chavent2009,tantardini2025globalbuckleyleverettmulticomponentflow}.

\subsection{Slice-wise projection at fixed temperature}

When the nonisothermal gTD condition fails in a continuum
$(\kappa\in\{m,f\})$, no exact state potential
$(\Pi^\kappa(\s^\kappa,T^\kappa))$ exists on the corresponding part of the
augmented state domain. A reduced construction can nevertheless be introduced
by fixing the temperature and projecting only the saturation-sector capillary
field onto gradient fields on the simplex. This projection is performed
separately in the matrix and in the fracture, so the continuum index
$(\kappa)$ is kept explicit.

For fixed $(\kappa\in\{m,f\})$ and fixed $(T)$, define the saturation-sector
capillary field
\begin{equation}
\bm C_s^\kappa(\s;T)
:=
\big(
A_1^\kappa(\s,T),\ldots,A_{n_p-1}^\kappa(\s,T)
\big).
\label{eq:Cs_kappa_projection}
\end{equation}
The functions $(A_i^\kappa)$ are the saturation components of the
mobility-weighted capillary field in continuum $(\kappa)$; they are not scalar
potentials. The projected scalar potential is denoted below by
$(\Pi^{\kappa,\star})$.

For each fixed \(T\), the projected potential
\(\Pi^{\kappa,\star}(\cdot;T)\in H^1(\Delta)\) is defined as the minimizer
of the least-squares functional
\begin{widetext}
\begin{equation}
\Pi^{\kappa,\star}(\cdot;T)
=
\arg\min_{\substack{\Pi\in H^1(\Delta)\\
\int_\Delta \Pi(\s;T)\,\dd\s=0}}
\frac12
\int_\Delta
\left|
\grads \Pi(\s;T)
-
\bm C_s^\kappa(\s;T)
\right|^2
\,\dd\s .
\label{eq:H1proj}
\end{equation}
\end{widetext}
The zero-mean constraint fixes the additive constant of the potential.

The constrained minimizer is characterized by vanishing first variation. Thus,
for every admissible test function \(\psi\in H^1(\Delta)\) satisfying
\(\int_\Delta\psi\,\dd\s=0\),
\begin{equation}
\int_\Delta
\left(
\grads\Pi^{\kappa,\star}
-
\bm C_s^\kappa
\right)
\cdot
\grads\psi\,
\dd\s
=0 .
\label{eq:weak_EL}
\end{equation}
Equation~\eqref{eq:weak_EL} is the weak Euler--Lagrange condition associated
with the constrained minimization problem \eqref{eq:H1proj}. Integrating it by
parts yields the equivalent Poisson--Neumann problem
\begin{align}
\Delta_s\Pi^{\kappa,\star}(\s;T)
&=
\grads\!\cdot \bm C_s^\kappa(\s;T)
\qquad \text{in } \Delta,
\label{eq:poisson_proj_in}
\\
\partial_{\bm n}\Pi^{\kappa,\star}(\s;T)
&=
\bm C_s^\kappa(\s;T)\cdot\bm n
\qquad \text{on } \partial\Delta,
\label{eq:poisson_proj_bd}
\end{align}
supplemented by the zero-mean constraint included in
Eq.~\eqref{eq:H1proj}, which fixes the additive constant. The Neumann compatibility condition is automatically
satisfied by the divergence theorem, so the zero-mean gauge is the only
additional normalization required.

Define the slice-wise residual in continuum \(\kappa\) by
\begin{equation}
\bm R^\kappa(\s;T)
=
\bm C_s^\kappa(\s;T)
-
\grads\Pi^{\kappa,\star}(\s;T).
\label{eq:Rfield}
\end{equation}
Then \(\bm R^\kappa\) is the non-gradient part of the saturation-sector field
on the fixed-temperature slice. In particular, \(\bm R^\kappa\equiv0\) if and
only if the saturation-sector field is already integrable at that temperature.

\begin{definition}[Fiberwise projected global pressure]
For each fixed temperature \(T\) and each continuum
\(\kappa\in\{m,f\}\), define
\begin{equation}
\pg^{\kappa,\star}
=
p_r^\kappa+\Pi^{\kappa,\star}(\s^\kappa;T^\kappa).
\label{eq:pg_star}
\end{equation}
We refer to \(\pg^{\kappa,\star}\) as the \emph{fiberwise projected global
pressure}.
\end{definition}

The adjective ``fiberwise'' is essential: the projection is performed
independently on each isothermal slice of the augmented state domain. It
therefore removes only the non-gradient component of the saturation-sector
field. It does not modify the mixed defect
\(\partial_T A_i-\partial_{s_i}B\), and hence does not restore full
nonisothermal exactness when that defect is nonzero.

Because the projection modifies only the constitutive representation of the
capillary contribution, the reduced balance laws in
Sec.~\ref{sec:matrix_fracture} remain in conservative form. The projected
global-pressure formulation is therefore a conservative reduced surrogate, not
an exactly equivalent phase-pressure formulation in mixed-nonexact regimes.

\subsection{Relation to the local nonisothermal defect}

The pointwise quantities $\cD_{ss}(\s,T)$ and $\cD_{sT}(\s,T)$ introduced in Sec.~\ref{sec:defect_diagnostics} diagnose local failure of exactness on the augmented state domain. The slice-wise projection residual \(\bm R^\kappa(\s;T)\) plays a complementary role. It is the \(L^2\)-optimal non-gradient remainder of the saturation-sector field at fixed temperature, whereas \(\cD_{ss}\) and \(\cD_{sT}\) measure pointwise violations of the compatibility relations.

This distinction is important. A small projection residual on a given temperature slice indicates that the saturation-sector field is close, in an integrated sense, to a gradient field on that slice. By contrast, a small local curl defect \(\cD_{ss}\) is a pointwise statement. The two diagnostics are therefore related but not interchangeable.

Equally important, a small saturation-sector residual does \emph{not} imply near-exactness of the full nonisothermal problem. It only indicates that the isothermal part of the obstruction is weak. The genuinely nonisothermal obstruction is controlled separately by the mixed compatibility defect involving \(B(\s,T)\), namely
\(
\partial_TA_i-\partial_{s_i}B
\).
For this reason, the projected closure must always be interpreted together with the mixed diagnostics.

\subsection{Nonexactness quantifiers}

A practical scalar measure of the saturation-sector projection defect at fixed
temperature is
\begin{equation}
\mathfrak D^\kappa(T)
=
\left(
\int_\Delta |\bm R^\kappa(\s;T)|^2\,\dd\s
\right)^{1/2}.
\label{eq:Dproj}
\end{equation}
This quantity measures how far the slice-wise field in continuum \(\kappa\)
is from being representable by a scalar saturation potential on the simplex.

The genuinely nonisothermal obstruction is measured separately by
\begin{equation}
\mathfrak M^\kappa(T)
=
\left(
\int_\Delta
\sum_{i=1}^{n_p-1}
\left|
\partial_T A_i^\kappa(\s,T)
-
\partial_{s_i}B^\kappa(\s,T)
\right|^2
\,\dd\s
\right)^{1/2}.
\label{eq:DMixed}
\end{equation}
Whereas \(\mathfrak D^\kappa(T)\) quantifies the loss of
saturation-sector integrability on each fixed-temperature slice in continuum
\(\kappa\), \(\mathfrak M^\kappa(T)\) quantifies the mixed
saturation--temperature incompatibility that prevents those slices from
assembling into a single potential on the augmented state domain.

Taken together, the pair
$\big(\mathfrak D^\kappa(T),\mathfrak M^\kappa(T)\big)$
provides a useful continuum-wise decomposition of nonexactness. Small
\(\mathfrak D^\kappa\) and small \(\mathfrak M^\kappa\) indicate behavior
close to full exactness in continuum \(\kappa\). Large \(\mathfrak D^\kappa\)
with small \(\mathfrak M^\kappa\) indicates a regime dominated by
saturation-sector nonintegrability. Small \(\mathfrak D^\kappa\) with large
\(\mathfrak M^\kappa\) indicates a regime that is nearly integrable on each
isothermal slice but remains genuinely nonexact because of thermal coupling.
When both are large, the reduced global-pressure structure is strongly
nonexact in both senses.

These two quantifiers are used below to classify the numerical regimes explored
by the reduced matrix--fracture trajectories and to identify whether the loss
of exactness is dominated by saturation-sector nonintegrability or by mixed
saturation--temperature incompatibility.

%%%%%%%%%%%%%%%%%%%%%%%%%%% PICTURES BENCHMARK A %%%%%%%%%%%%%%%%%%%%%%%%%%%
\begin{figure*}[t]
\centering
\includegraphics[width=0.32\textwidth]{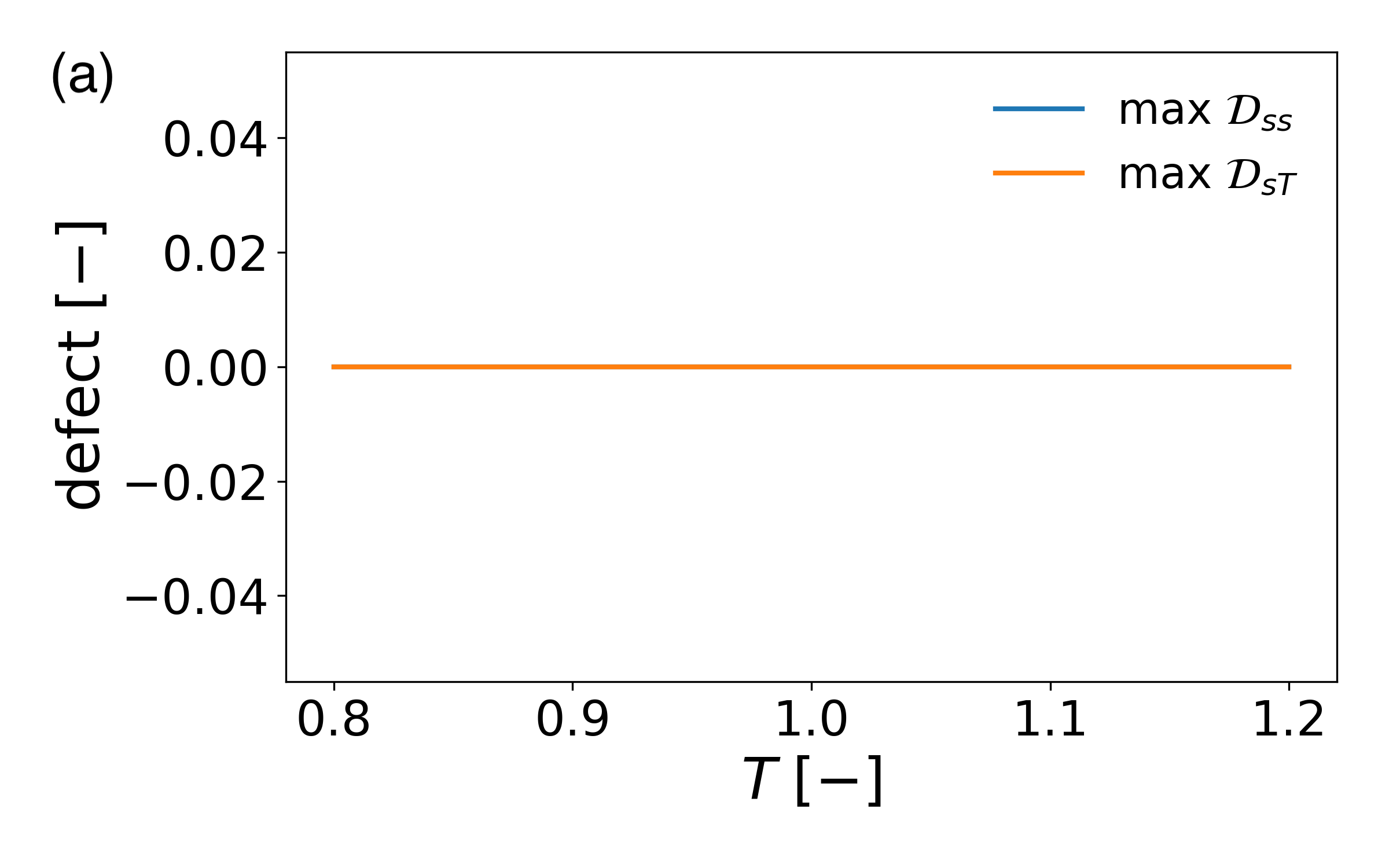}
\includegraphics[width=0.32\textwidth]{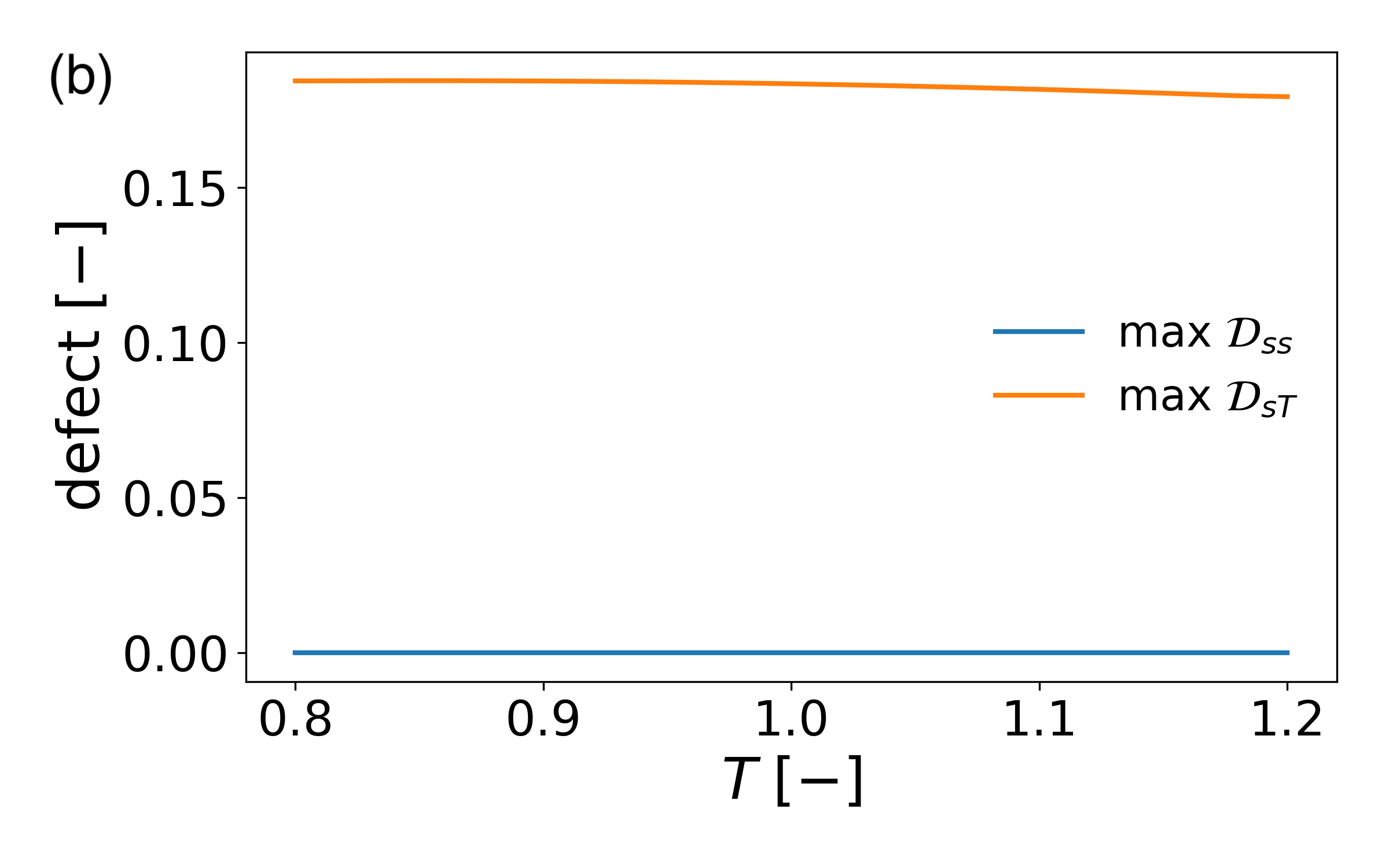}
\includegraphics[width=0.32\textwidth]{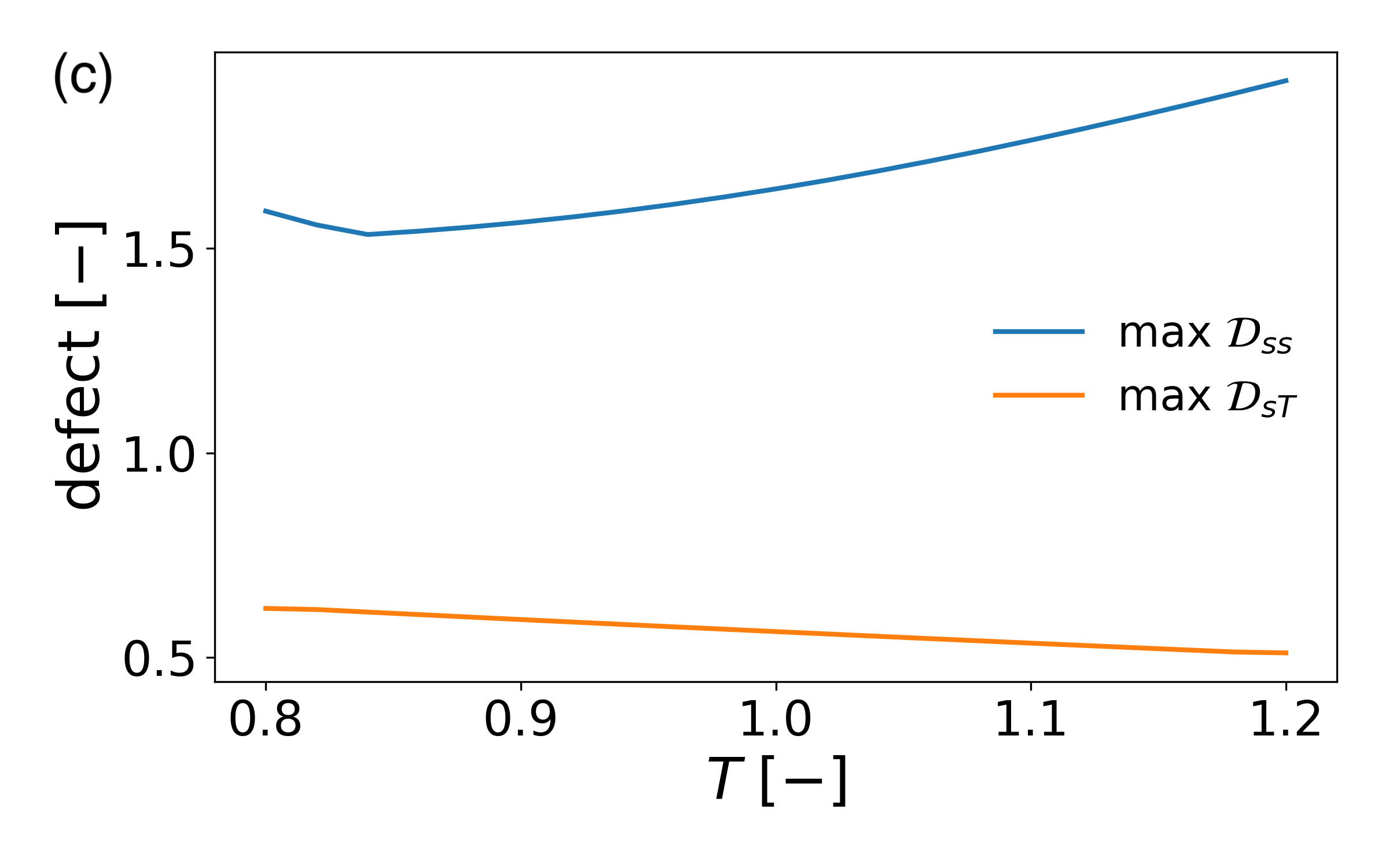}

\vspace{0.3cm}

\includegraphics[width=0.32\textwidth]{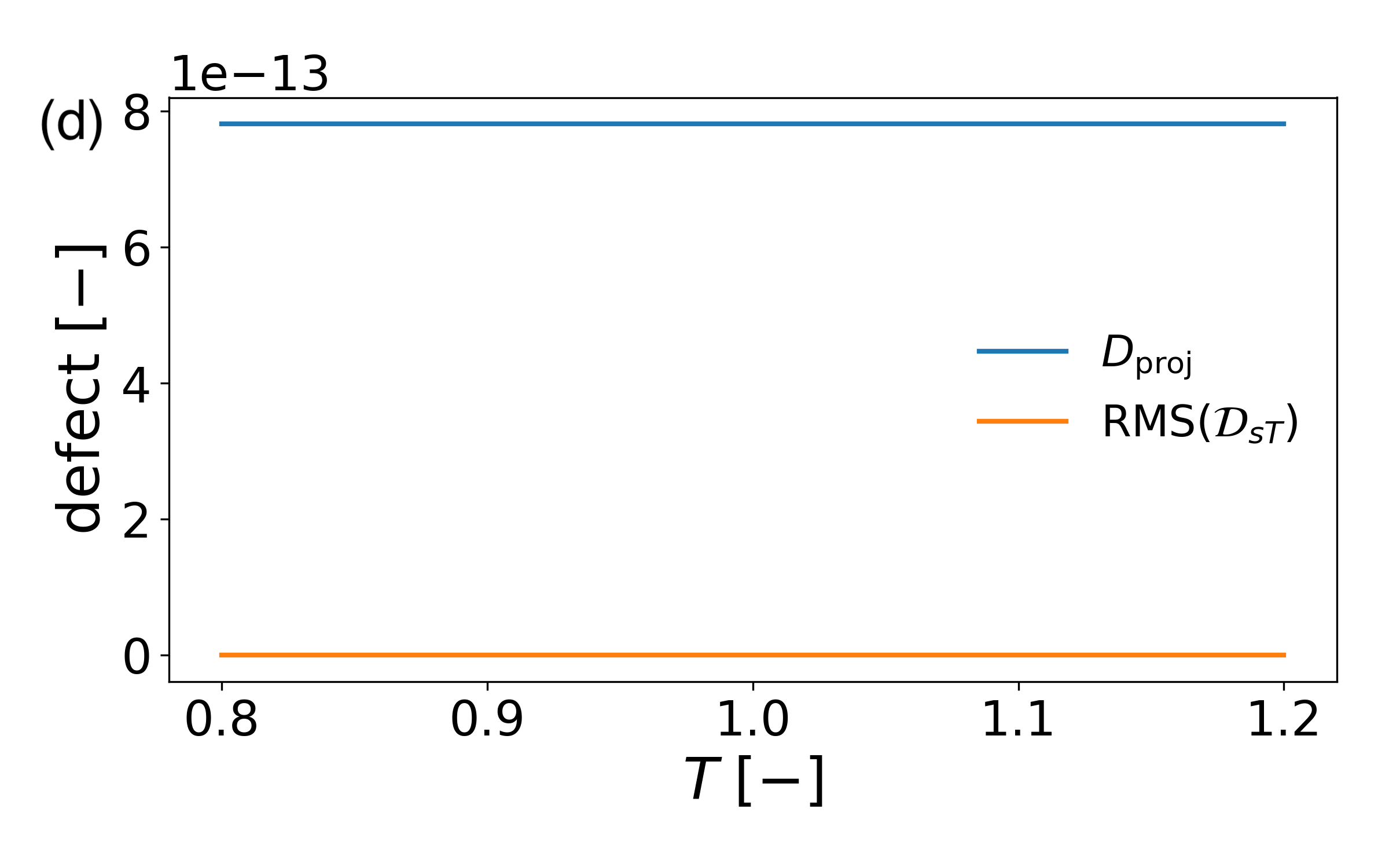}
\includegraphics[width=0.32\textwidth]{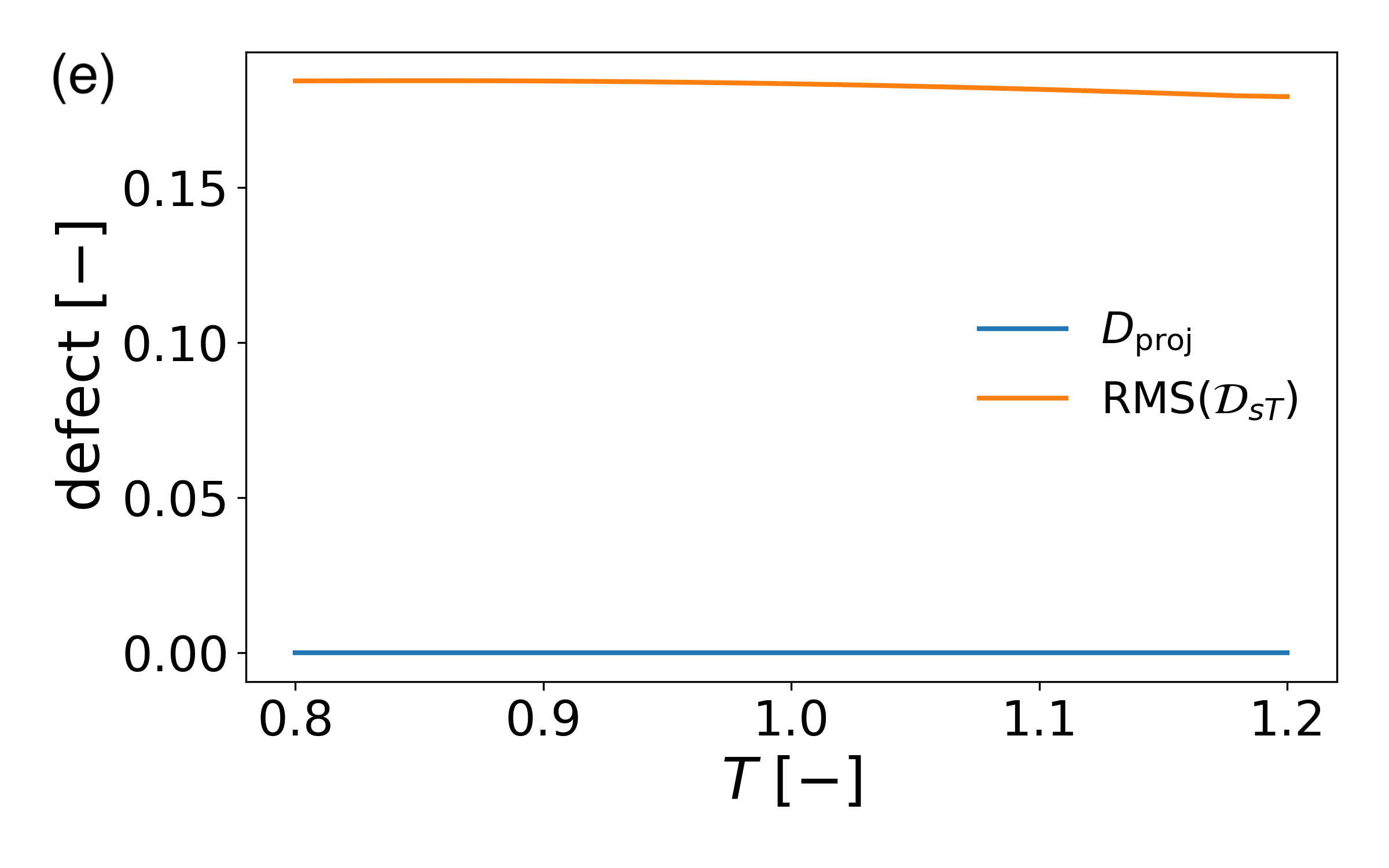}
\includegraphics[width=0.32\textwidth]{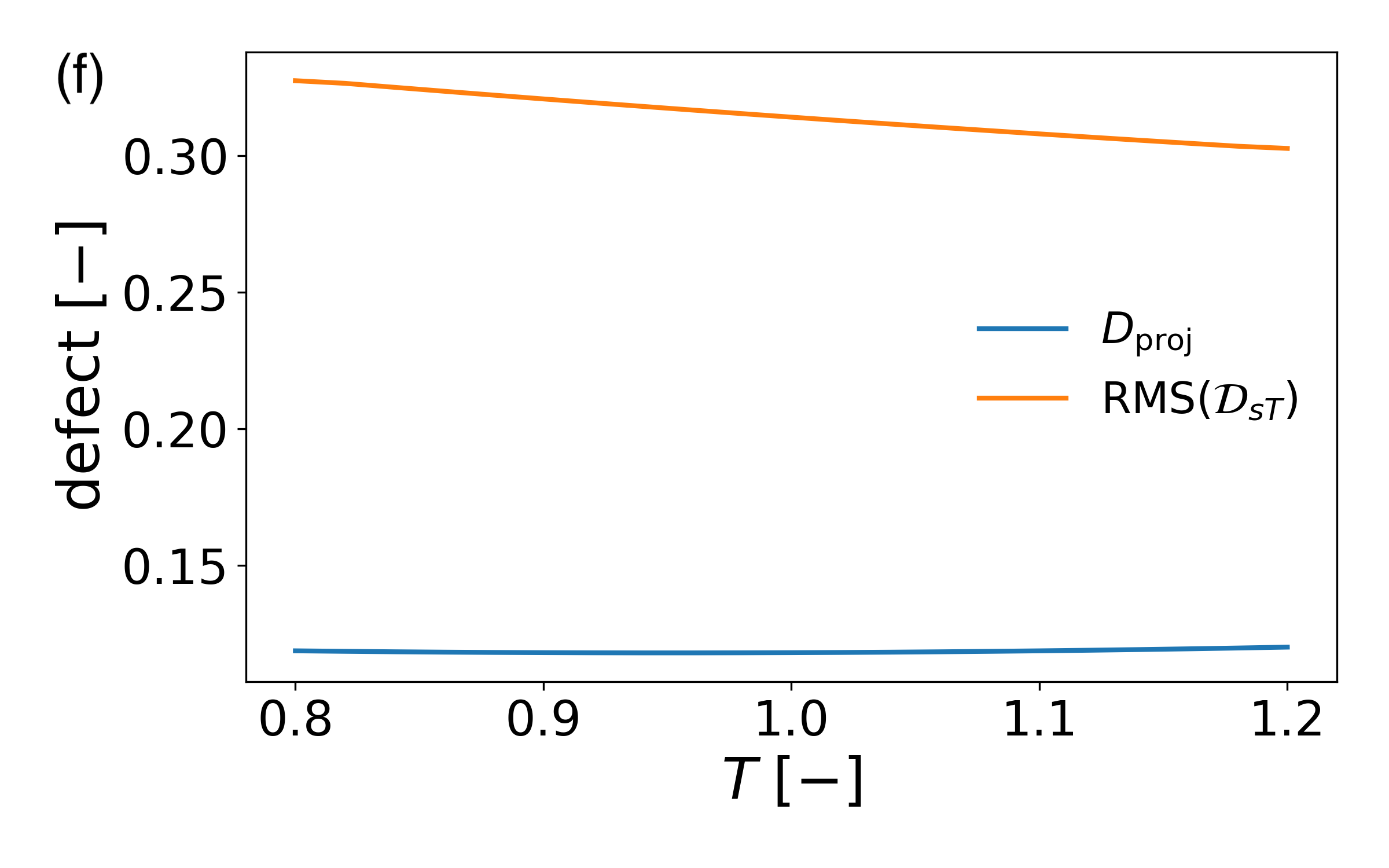}
\caption{
Benchmark~A: line diagnostics on the augmented state domain \(\Delta\times\cT\), with \(T\) as a control parameter. These curves are not time histories. All plotted quantities are absolute dimensionless defect measures. The top row reports the maxima of \(\mathcal D_{ss}\) and \(\mathcal D_{sT}\) over the admissible simplex at each sampled \(T\); the bottom row reports the slice-wise projection defect \(D_{\mathrm{proj}}\) and a discrete RMS-type summary of \(\mathcal D_{sT}\). (a),(d) A1, fully exact: all diagnostics remain at numerical roundoff. (b),(e) A2, slice-wise exact but nonisothermal nonexact: \(\max \mathcal D_{ss}\) and \(D_{\mathrm{proj}}\) remain negligible, whereas \(\max \mathcal D_{sT}\) and its mixed summary remain positive. (c),(f) A3, fully nonexact: both defect sectors are nonzero, and \(D_{\mathrm{proj}}\) is finite. The three columns therefore recover the three regimes predicted by Theorem~\ref{thm:thermal_gtd}.
}
\label{fig:A1A2A3_lines}
\end{figure*}

\begin{figure*}[t]
\centering
\includegraphics[width=0.31\textwidth]{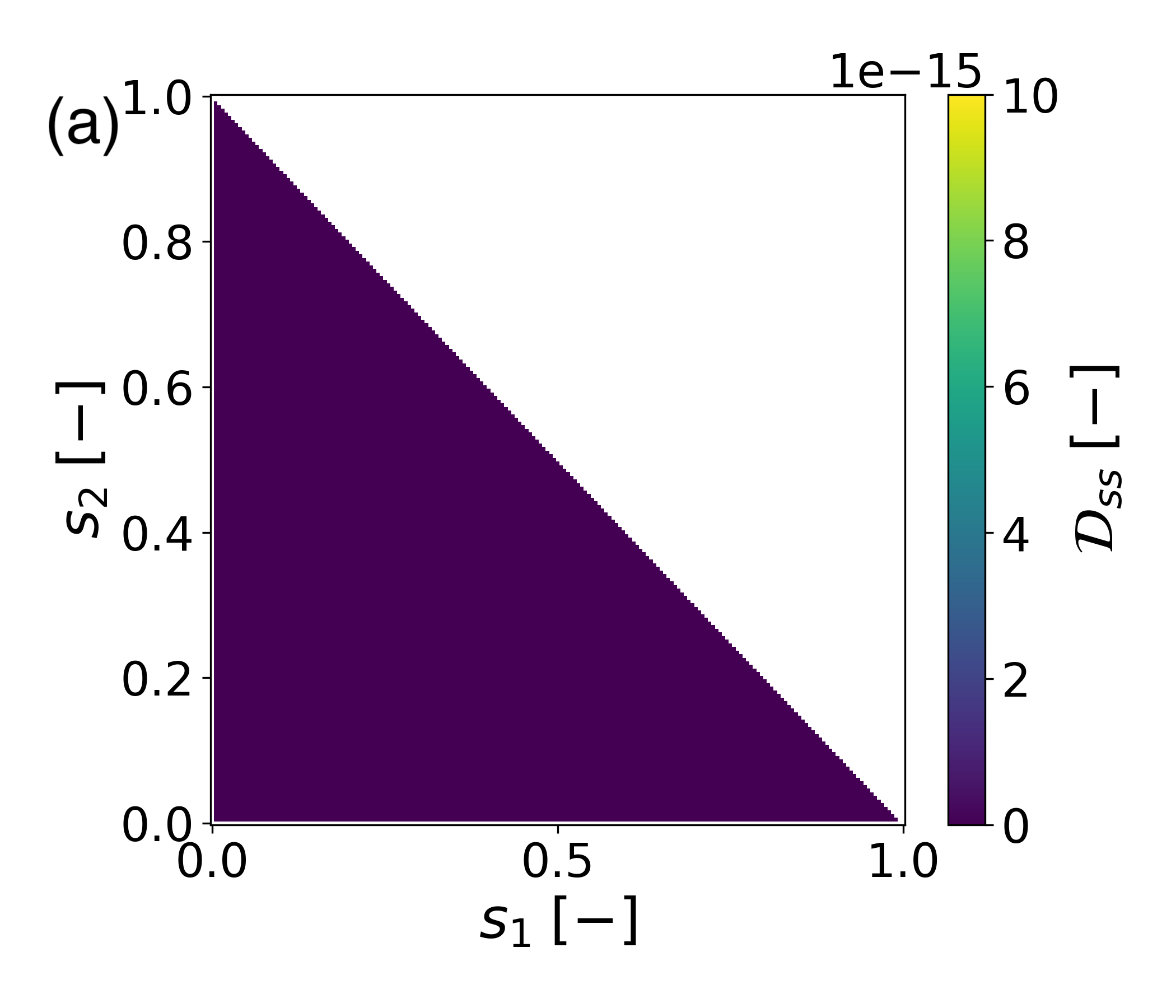}
\includegraphics[width=0.31\textwidth]{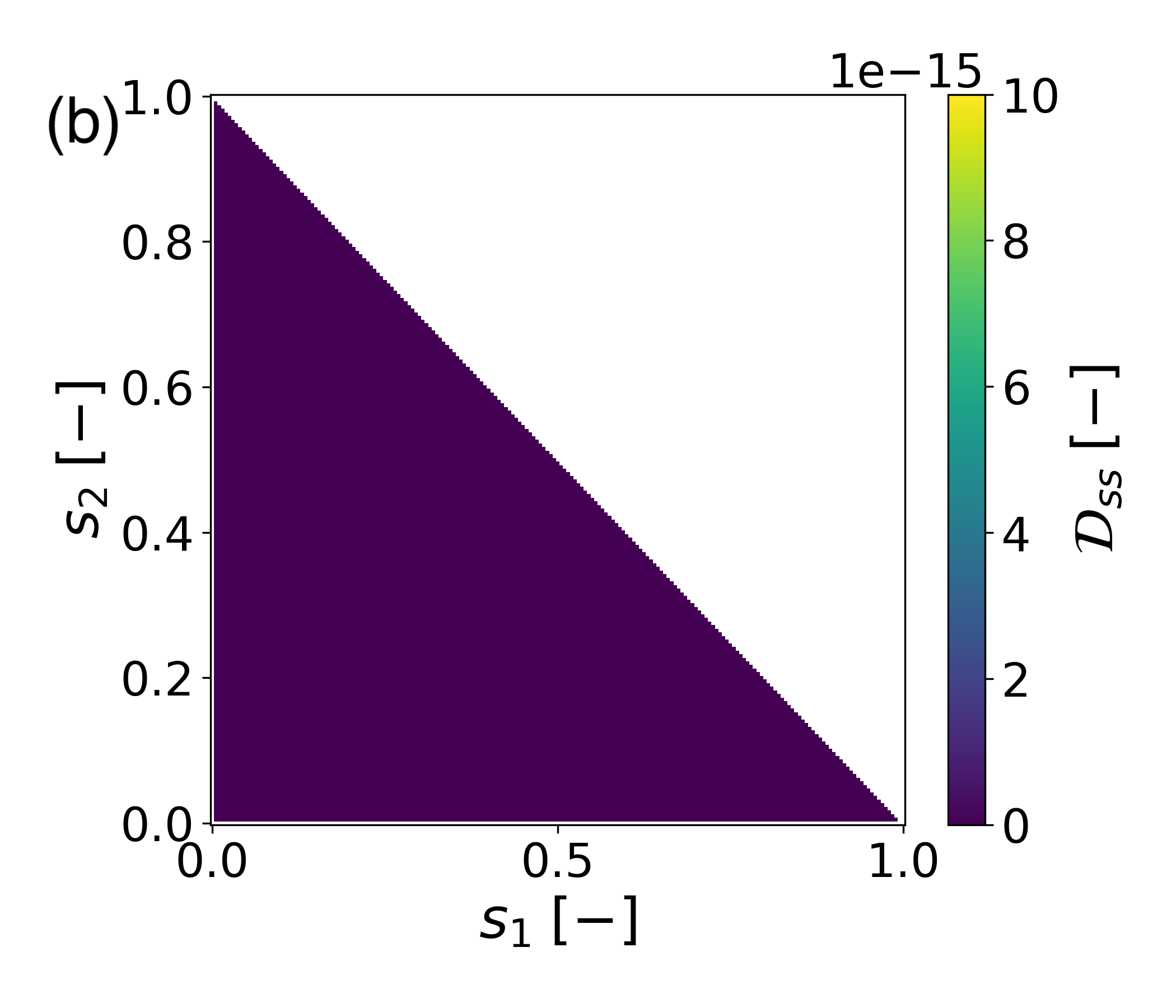}
\includegraphics[width=0.31\textwidth]{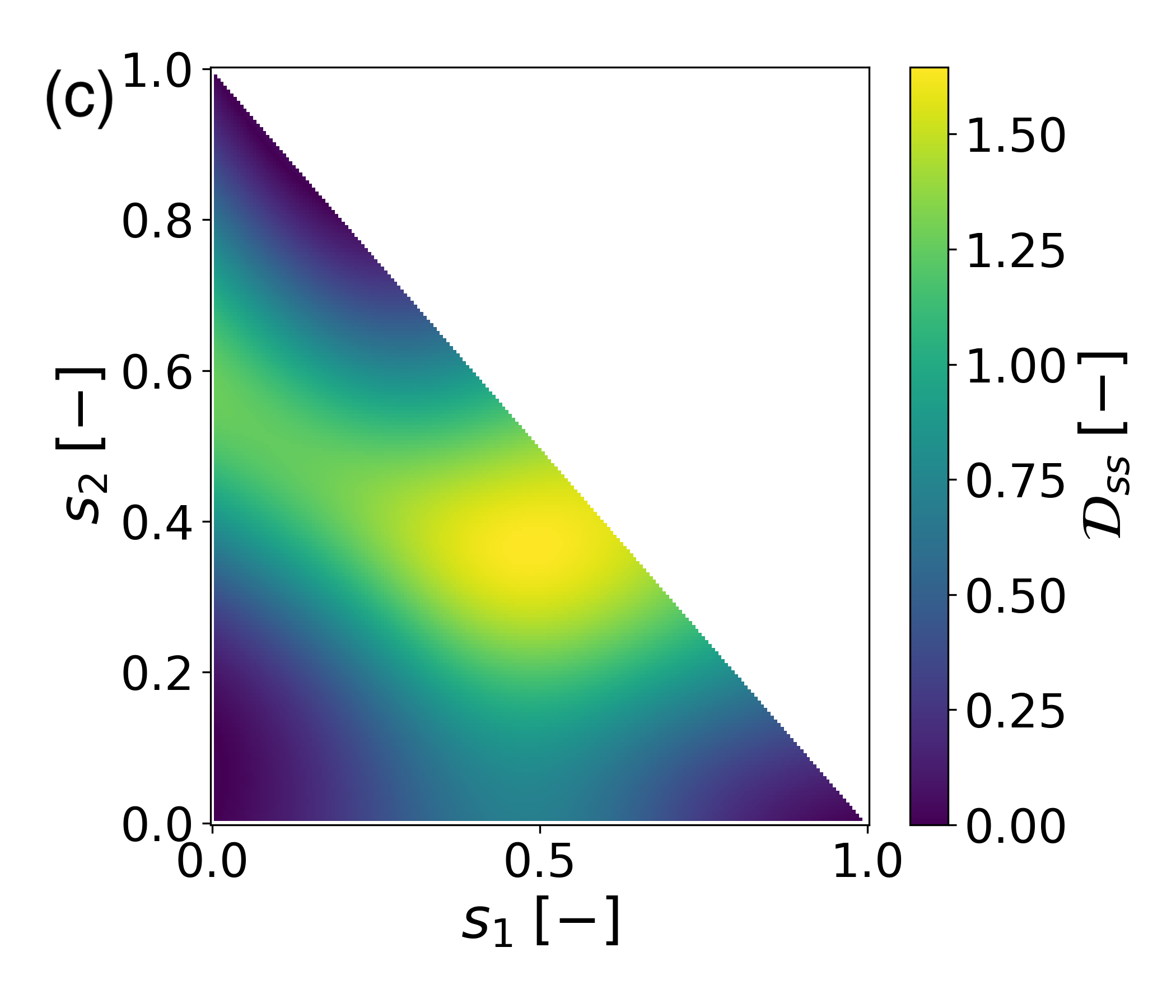}

\vspace{0.3cm}
\includegraphics[width=0.31\textwidth]{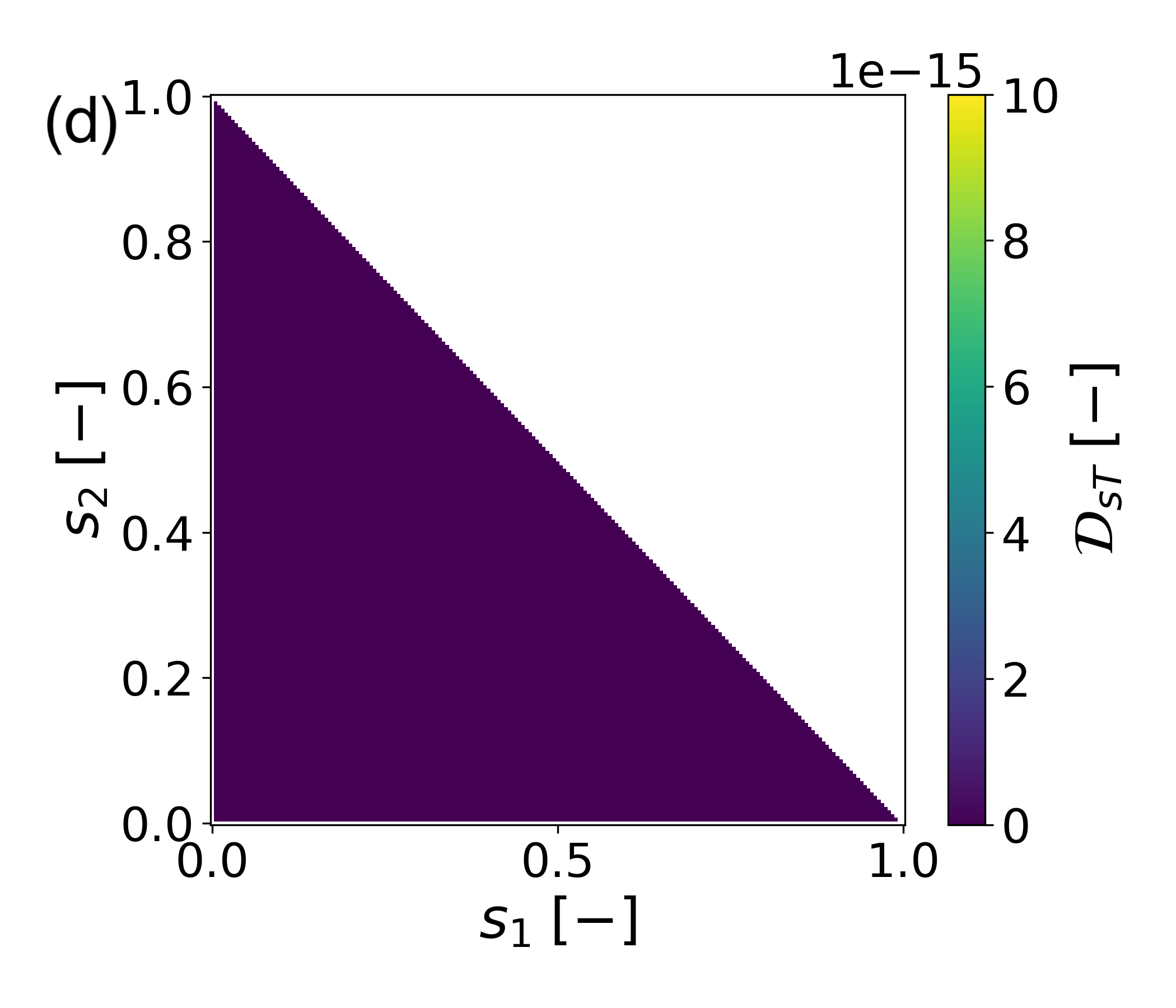}
\includegraphics[width=0.31\textwidth]{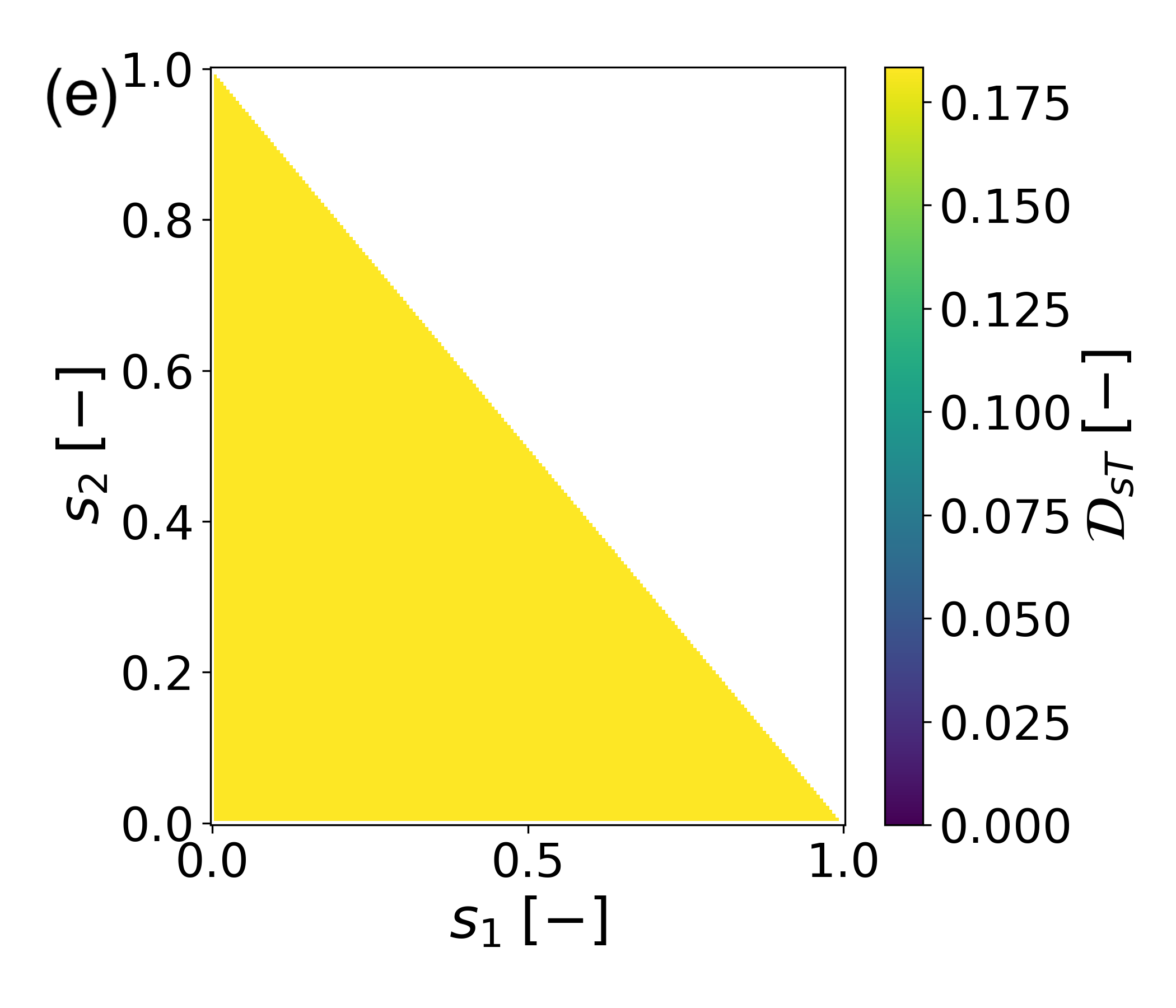}
\includegraphics[width=0.31\textwidth]{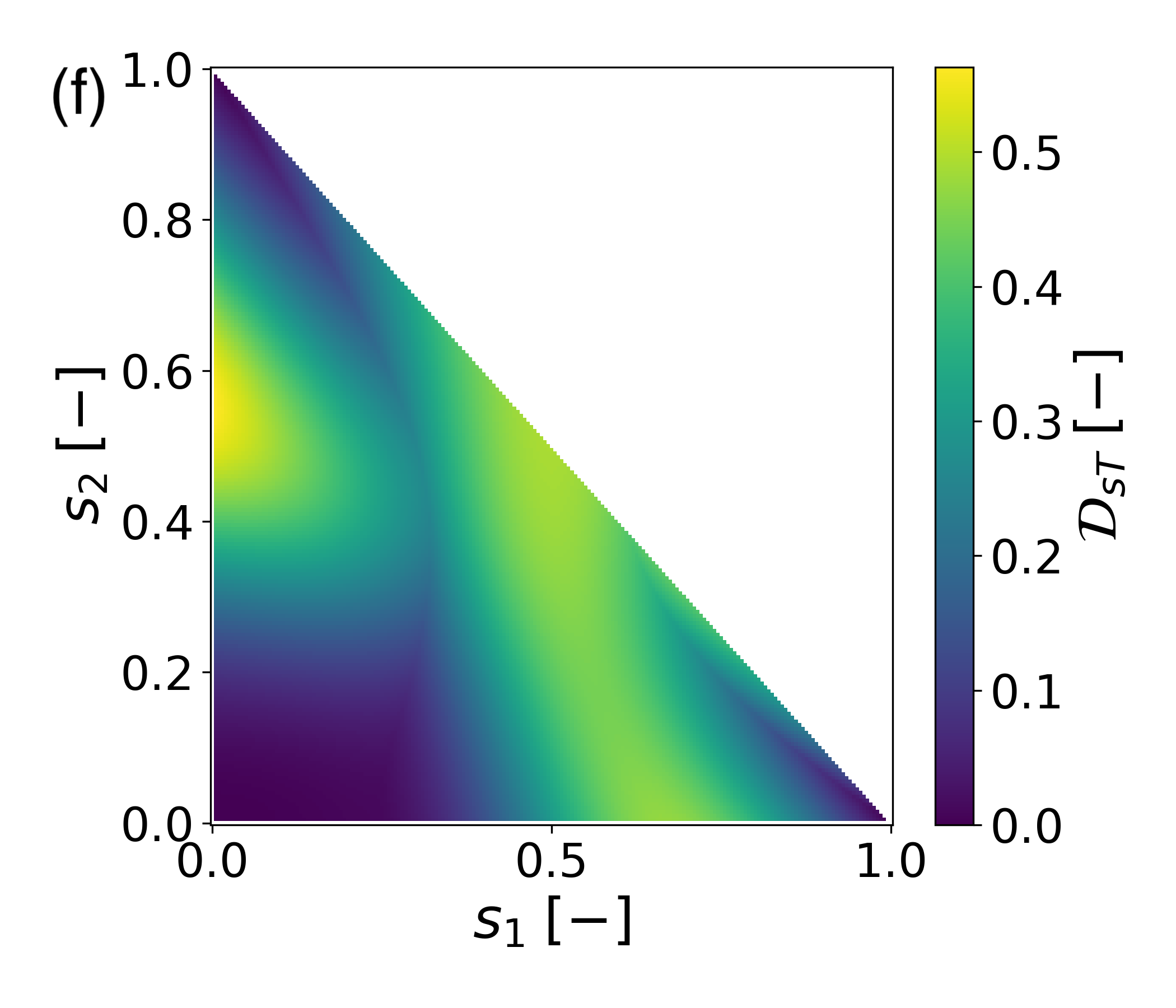}
\caption{
Benchmark~A: representative state-domain defect maps on fixed-temperature slices of \(\Delta\times\cT\). These panels are not time snapshots or stationary PDE solutions, but evaluations of the defect fields on the prescribed constitutive model. (a),(d) A1, fully exact: both \(\mathcal D_{ss}\) and \(\mathcal D_{sT}\) are negligible across the simplex. (b),(e) A2, slice-wise exact but nonisothermal nonexact: \(\mathcal D_{ss}\) remains negligible, whereas \(\mathcal D_{sT}\) is nonzero across the admissible region. (c),(f) A3, fully nonexact: both \(\mathcal D_{ss}\) and \(\mathcal D_{sT}\) are distributed broadly over the simplex. The admissible domain is sampled on a structured Cartesian \((s_1,s_2)\)-grid and masked outside \(\Delta\); the slight staircase appearance of the diagonal boundary is therefore a rendering artifact.
}
\label{fig:A_maps}
\end{figure*}

%%%%%%%%%%%%%%%%%%%%%%%%%%% PICTURES BENCHMARK B %%%%%%%%%%%%%%%%%%%%%%%%%%%

\begin{figure*}[t]
\centering
\includegraphics[width=0.32\textwidth]{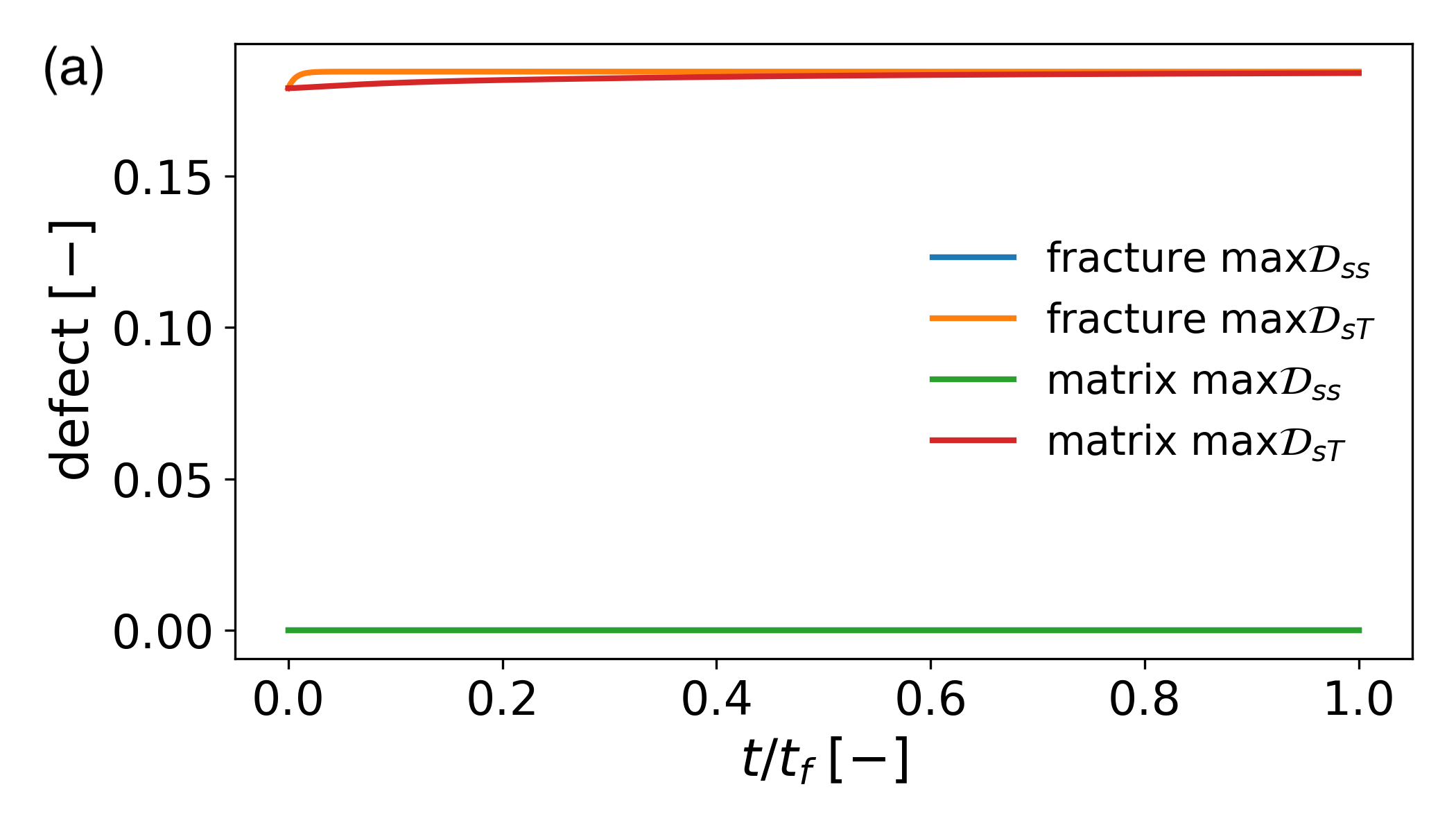}
\includegraphics[width=0.32\textwidth]{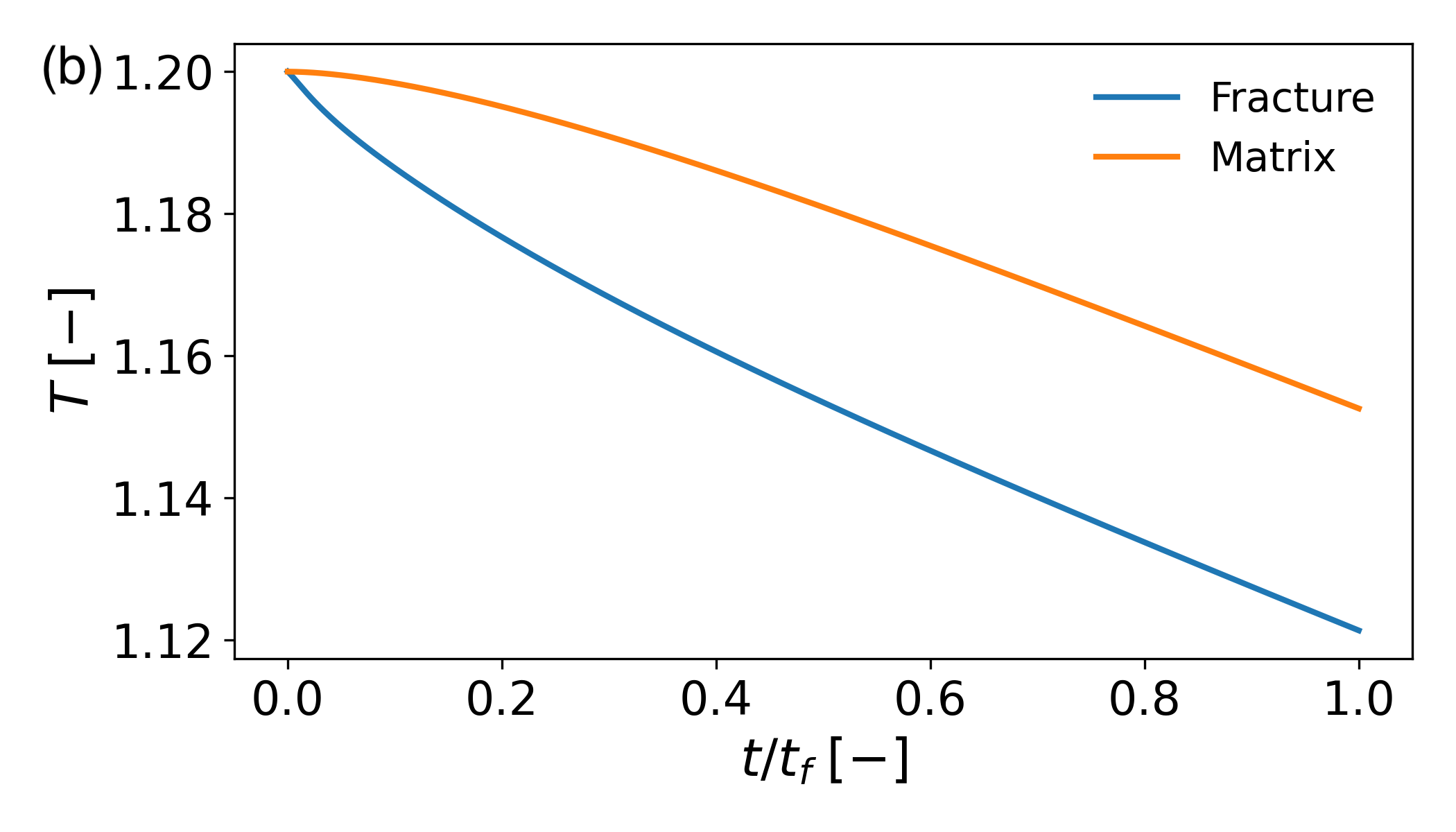}
\includegraphics[width=0.28\textwidth]{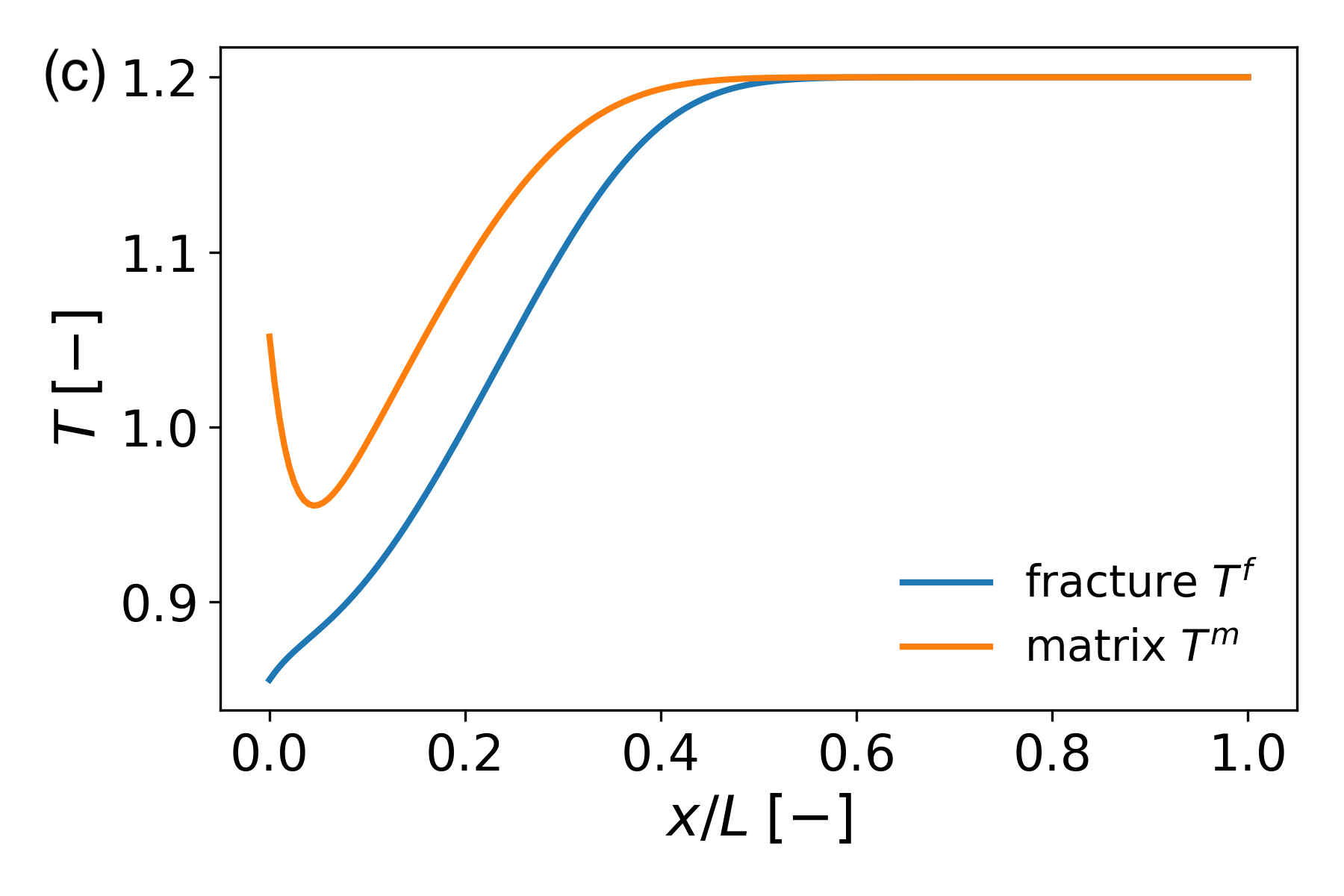}
\vspace{0.3cm}
\includegraphics[width=0.32\textwidth]{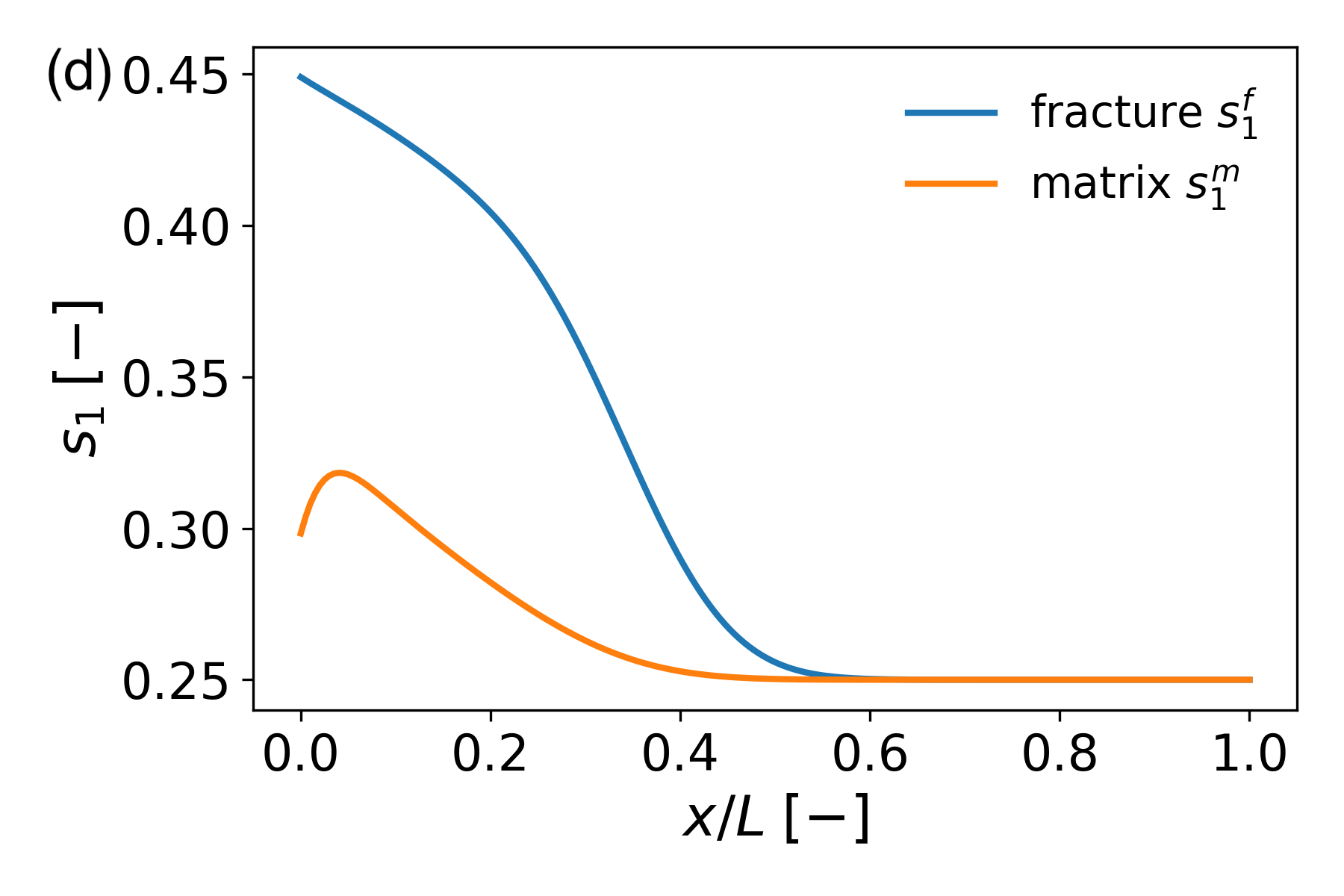}
\includegraphics[width=0.32\textwidth]{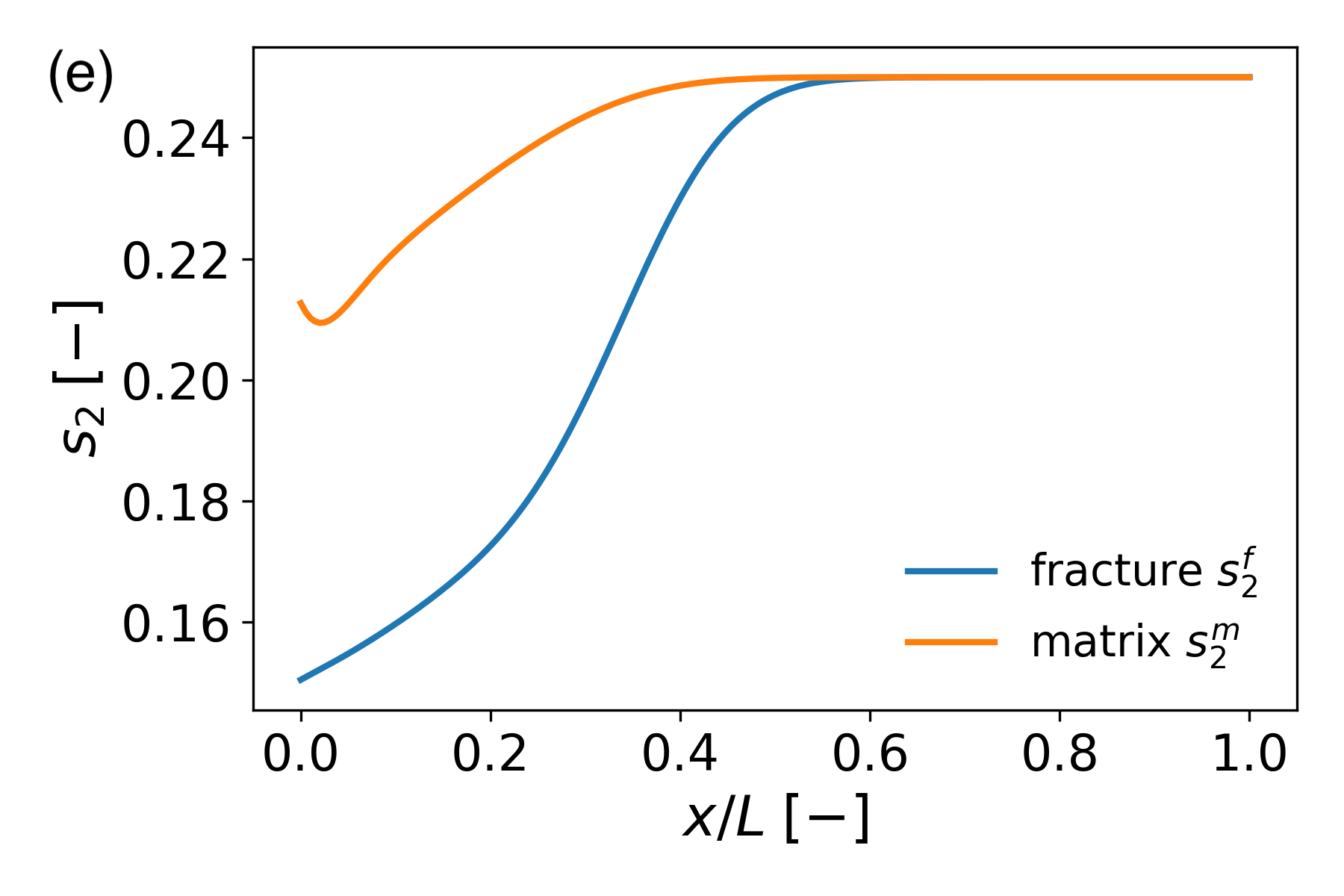}
\includegraphics[width=0.28\textwidth]{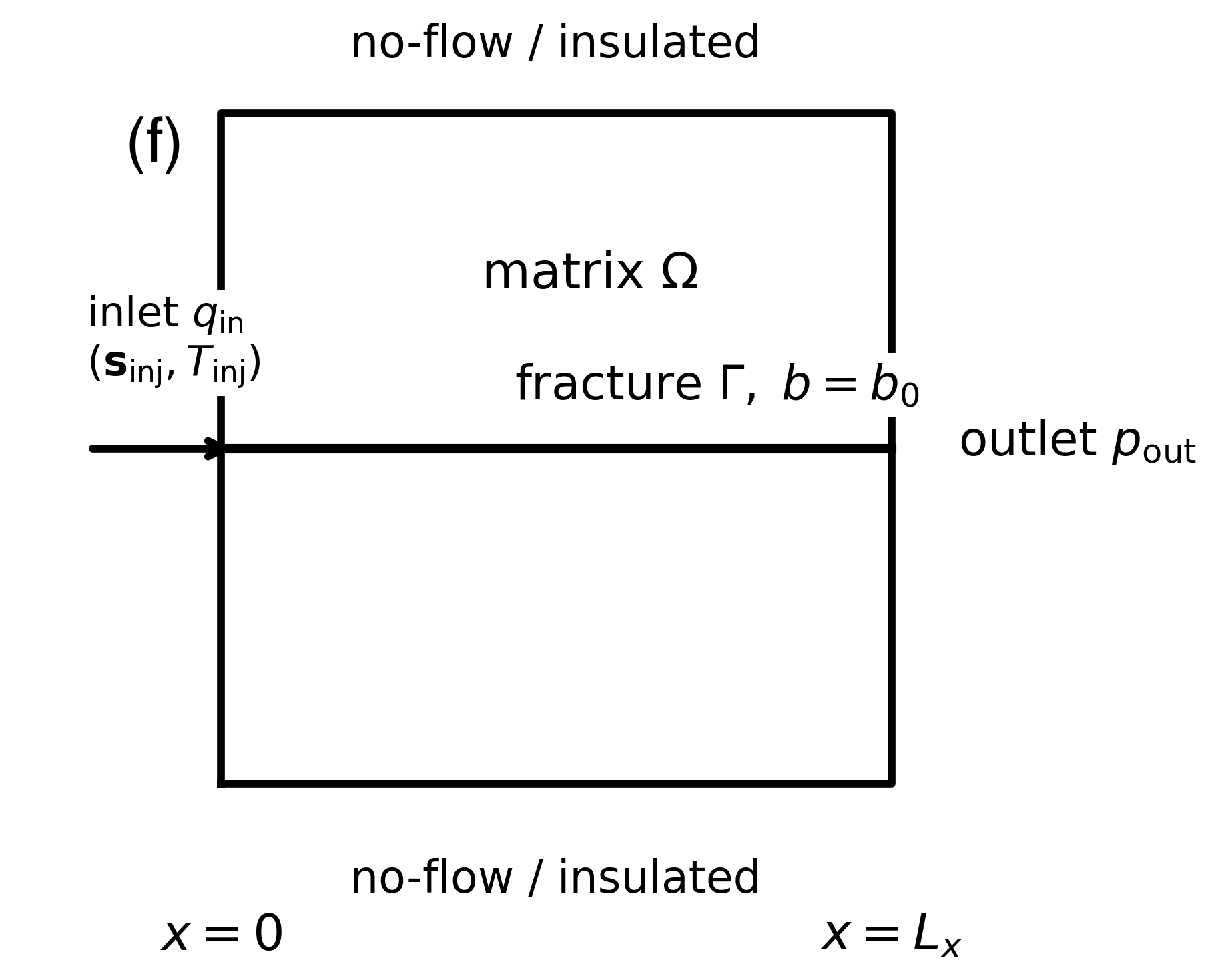}
\caption{
Benchmark~B: fixed-aperture fractured thermal-front benchmark. (a) Time histories of the maximum local defects in matrix and fracture; the mixed defect dominates, while the saturation-sector defect remains smaller. (b) Mean dimensionless temperatures in matrix and fracture versus normalized time. (c) Final temperature profiles \(T(x/L)\). (d) Final \(s_1(x/L)\) profiles. (e) Final \(s_2(x/L)\) profiles. (f) Reduced matrix--fracture geometry and boundary conditions: a rectangular matrix domain \(\Omega\) containing a single embedded fracture \(\Gamma\) with fixed aperture \(b=b_0\), prescribed inflow and injected state \((\s_{\mathrm{inj}},T_{\mathrm{inj}})\) at \(x=0\), outlet pressure at \(x=L_x\), and no-flow, thermally insulated upper and lower boundaries. Even without aperture feedback, the two continua follow different thermodynamic trajectories, and the resulting defect history is dominated by the mixed saturation--temperature sector.
}
\label{fig:benchmarkB_all}
\end{figure*}
%%%%%%%%%%%%%%%%%%%%%%%%%%%%%%%%%%%%%%%%%%%%%%%%%%%%%

\section{Numerical verification and diagnostic validation}
\label{sec:numerics}

This section verifies the augmented-state exactness framework through three
targeted benchmarks.
Benchmark~A verifies the compatibility conditions of
Sec.~\ref{sec:thermal_gtd} directly on prescribed constitutive data.
Benchmark~B validates the behavior of the exactness diagnostics along a reduced
matrix--fracture thermal trajectory with fixed aperture.
Benchmark~C validates the same diagnostic workflow when aperture-sensitive
transmissivity reshapes the state trajectory and the associated defect history.

Throughout this section we restrict attention to \(n_p=3\) phases, so that the saturation state is described by two independent variables \(\s=(s_1,s_2)\) on the simplex \(\Delta\), with \(s_3=1-s_1-s_2\).
The exact global-pressure relation \eqref{eq:pg_exact} and the projected closure \eqref{eq:pg_star} provide the two reference structures used to interpret the diagnostics.

\subsection{Benchmark hierarchy, setup, and conventions}
\label{subsec:verification_philosophy}

The numerical program consists of three benchmarks.

\paragraph*{Benchmark A: prescribed augmented-state data.}
Benchmark~A verifies the compatibility conditions \eqref{eq:compat_ss} and \eqref{eq:compat_sT} directly on \(\Delta\times\cT\), with constitutive data chosen so that exactness or nonexactness is known in advance.
It isolates the state-domain criterion from transport, geometry, and fracture feedback.

\paragraph*{Benchmark B: fractured thermal front with fixed aperture.}
Benchmark~B introduces reduced fractured transport with fixed aperture and tests how matrix--fracture exchange and thermal forcing drive the two continua along different state trajectories.

\paragraph*{Benchmark C: fractured thermal front with prescribed aperture feedback.}
Benchmark~C uses prescribed aperture-feedback histories to test how
aperture-dependent transmissivity modifies the thermal trajectory and the
associated defect history.

Benchmark~A has no physical geometry and is posed directly on
\(\Delta\times\cT\). Benchmarks~B and C are reduced matrix--fracture diagnostics
on \(x\in[0,1]\). They retain separate matrix and fracture states while avoiding
the additional complexity of a production-scale mixed-dimensional simulator.
This reduced setting is sufficient for the present purpose, which is to verify
the augmented-state exactness diagnostics and to show how thermal forcing,
matrix--fracture exchange, and aperture feedback move the system through
different exactness regimes.
The fracture transmissivity used in the reduced benchmarks is the
aperture-dependent quantity $\mathcal T_f(b)$ defined in
Eq.~\eqref{eq:cubic_law}. Throughout this section, $T^f$ denotes fracture
temperature, whereas $\mathcal T_f$ denotes fracture transmissivity.

In Benchmark~B the aperture is fixed, whereas Benchmark~C uses prescribed
aperture-feedback histories to isolate the effect of aperture evolution on
transmissivity, velocity, temperature profiles, and defect histories.

All temperatures shown in the plots are nondimensional.
In Benchmark~A, \(T\) is a dimensionless control parameter on \(\Delta\times\cT\).
In Benchmarks~B--C, \(T\) is the corresponding reduced nondimensional temperature used in the surrogate models.

\subsection{Implementation strategy and diagnostics}
\label{subsec:discrete_strategy}

The three benchmarks are implemented at different levels.
Benchmark~A is a direct state-domain evaluation on sampled points of \(\Delta\times\cT\), with no physical PDE.
Benchmark~B is a reduced one-dimensional fractured-transport surrogate combining upwind advection, explicit smoothing, and matrix--fracture exchange.
Benchmark~C uses prescribed aperture and thermal histories to isolate the
feedback between aperture, transmissivity, velocity, and exactness defects.

At each evaluation point, the implementation computes the constitutive quantities introduced in Sec.~\ref{sec:thermal_gtd}, namely the mobilities, capillary pressures, and the fields \(A_i\) and \(B\), and then assembles the local defect fields \eqref{eq:defect_ss} and \eqref{eq:defect_sT}.
In Benchmark~A the required derivatives come directly from the prescribed constitutive expressions.
In Benchmarks~B--C they are evaluated numerically along the sampled state evolution using consistent finite-difference approximations.

Only the diagnostics used in the reported results are retained.

\paragraph*{(i) Local closure defects.}
The primary diagnostics are the local defect fields \eqref{eq:defect_ss} and \eqref{eq:defect_sT}.
Their maxima over space and time are recorded separately in matrix and fracture when applicable.
These measure the loss of saturation-sector compatibility and mixed saturation--temperature compatibility, respectively.

\paragraph*{(ii) Path-integral defect.}
A second diagnostic is based on the augmented capillary field
\(\mathbf C=(A_1,\ldots,A_{n_p-1},B)\).
For two piecewise \(C^{1}\) paths
\(\gamma_\ell(\tau)=\big(\s_\ell(\tau),T_\ell(\tau)\big)\), where
\(\ell=1,2\), with common endpoints
\(\gamma_1(0)=\gamma_2(0)\) and
\(\gamma_1(1)=\gamma_2(1)\), let \(\mathbf y=(\s,T)\). We define the path
defect by
\begin{equation}
\delta_C(\gamma_1,\gamma_2)
:=
\left|
\int_{\gamma_1}\mathbf C\cdot \mathrm d\mathbf y
-
\int_{\gamma_2}\mathbf C\cdot \mathrm d\mathbf y
\right|.
\label{eq:path_defect}
\end{equation}
Thus \(\delta_C(\gamma_1,\gamma_2)\) measures the mismatch between two
state-domain integrals connecting the same endpoints. It vanishes up to
numerical tolerance if \(\mathbf C\) is conservative and is used primarily in
Benchmark~A, where the comparison paths can be prescribed directly.

\paragraph*{(iii) Slice-wise projection and mixed summaries.}
In Benchmark~A we also monitor the slice-wise projection diagnostic, together
with a discrete RMS-type slice-wise summary of the local mixed defect
\(\mathcal D_{sT}\). These separate saturation-sector nonintegrability from
genuinely nonisothermal loss of exactness.

\paragraph*{(iv) Thermal, saturation, and aperture histories.}
In Benchmarks~B and C, mean temperatures, final temperature profiles, selected saturation profiles, and, in Benchmark~C, aperture, transmissivity, and fracture-velocity histories are reported to make the evolving state trajectory visible and to connect it to the defect histories.

The role of these benchmarks is therefore specific. They validate the
augmented-state exactness diagnostics and their behavior along reduced
nonisothermal matrix–fracture trajectories. They are not presented as a full
verification and validation campaign for a production multiphase-flow solver.
A full solver study would require an independent phase-pressure implementation,
pressure and phase-flux comparisons, mass and energy balance errors, and
grid/time-step convergence tests. Those tasks are distinct from the diagnostic
validation performed here and are identified below as future work.

To make the manufactured and reduced benchmarks reproducible, we specify the
constitutive and numerical data used below. Benchmark A is evaluated on a
uniform simplex grid with \(201\times201\) sampling points before masking the
inadmissible region \(s_1+s_2>1\). The temperature interval is sampled by 21
uniform values in \(T\in[0.8,1.2]\). 
We write $s_3=1-s_1-s_2$.

The manufactured constitutive families are generated from the common form
\begin{align}
\lambda_\alpha(\s,T)
&=
\ell_\alpha q_\alpha(\s)
\exp\!\left[\beta_\alpha(T-T_0)\right],
\label{eq:benchA_lam_common}
\\
p_{c,1}(\s,T)
&=
a_1s_1+a_2s_2+a_3s_1s_2 \nonumber \\
&+
(T-T_0)\left(c_1s_1^2+c_2s_1s_2\right),
\label{eq:benchA_pc1_common}
\\
p_{c,2}(\s,T)
&=
b_1s_1+b_2s_2+b_3s_1s_2 \nonumber \\
&+
(T-T_0)\left(d_1s_2^2+d_2s_1s_2\right),
\label{eq:benchA_pc2_common}
\\
p_{c,3}(\s,T)&=0 .
\label{eq:benchA_pc3_common}
\end{align}
Here \(T_0=1\). For A1 and A2, \(q_1=q_2=q_3=1\). For A3,
\(q_1=s_1^2\), \(q_2=s_2^2\), and \(q_3=s_3^2\). The parameter values used in
Eqs.~\eqref{eq:benchA_lam_common}--\eqref{eq:benchA_pc3_common} are reported in
Table~\ref{tab:benchmarkA_parameters}. The finite-difference diagnostics are
computed only where the required neighboring points remain inside the simplex,
so the reported fields exclude degenerate boundary stencils.

\begin{table}[htbp]
\caption{Manufactured constitutive families used in Benchmark A.}
\label{tab:benchmarkA_parameters}
\centering
\scriptsize
\begin{tabular}{ll}
\hline
Case & Data \\
\hline
A1 &
\begin{tabular}[t]{@{}l@{}}
Class: exact \\
\((\ell_1,\ell_2,\ell_3)=(1,1.2,0.8)\) \\
\((\beta_1,\beta_2,\beta_3)=(0,0,0)\) \\
\(q_1=q_2=q_3=1\) \\
\((a_1,a_2,a_3)=(1,0.4,0)\) \\
\((b_1,b_2,b_3)=(-0.2,0.8,0)\) \\
\((c_1,c_2)=(0,0)\), \((d_1,d_2)=(0,0)\)
\end{tabular}
\\
\hline
A2 &
\begin{tabular}[t]{@{}l@{}}
Class: mixed nonexact \\
\((\ell_1,\ell_2,\ell_3)=(1,1.2,0.8)\) \\
\((\beta_1,\beta_2,\beta_3)=(0,0.8,-0.6)\) \\
\(q_1=q_2=q_3=1\) \\
\((a_1,a_2,a_3)=(1,0.4,0)\) \\
\((b_1,b_2,b_3)=(-0.2,0.8,0)\) \\
\((c_1,c_2)=(0,0)\), \((d_1,d_2)=(0,0)\)
\end{tabular}
\\
\hline
A3 &
\begin{tabular}[t]{@{}l@{}}
Class: fully nonexact \\
\((\ell_1,\ell_2,\ell_3)=(1,1.3,1.1)\) \\
\((\beta_1,\beta_2,\beta_3)=(0,0.5,-0.4)\) \\
\((q_1,q_2,q_3)=(s_1^2,s_2^2,s_3^2)\) \\
\((a_1,a_2,a_3)=(1,0.5,0.7)\) \\
\((b_1,b_2,b_3)=(0.2,1.1,-0.5)\) \\
\((c_1,c_2)=(0.8,0.3)\), \((d_1,d_2)=(-0.6,0.4)\)
\end{tabular}
\\
\hline
\end{tabular}
\end{table}

Benchmark B uses the mixed-nonexact constitutive family A2 and evolves one
matrix profile and one fracture profile on \(x\in[0,1]\). The numerical
parameters are listed in Table~\ref{tab:benchmarkB_summary}. The local
exactness defects are evaluated with finite differences using
\(\epsilon_s=10^{-6}\) and \(\epsilon_T=10^{-4}\), with simplex clipping at
\(10^{-8}\).

Benchmark C uses prescribed synthetic aperture-feedback histories to isolate
the effect of aperture evolution on transmissivity, mean velocity, temperature
profiles, and defect histories. The coordinates are nondimensional:
\(\hat x=x/L\) and \(\hat t=t/t_f\). The parameters defining the prescribed
profiles are summarized in Table~\ref{tab:benchmarkC_summary}.

The reduced benchmarks can be interpreted through nondimensional ratios that
control the sampled flow regime. In Benchmark B, the velocity contrast
\(v_f/v_m=17.5\) gives a fracture-dominated advective thermal front, while
\(\eta_h=2.0\) and \(\eta_s=0.5\) control thermal and saturation exchange
between the two continua. In Benchmark C, aperture feedback enters through the
cubic law \(\mathcal T_f=b^3/12\), so an aperture ratio \(b/b_0\) produces a
transmissivity ratio \((b/b_0)^3\). These quantities determine how strongly
thermal advection, matrix--fracture exchange, and aperture-sensitive
transmissivity move the local state through the augmented
saturation--temperature domain.

\subsection{Benchmark A: prescribed augmented-state exactness tests}
\label{subsec:benchmark_A_num}

Benchmark~A is not a transport simulation.
It is a direct verification on \(\Delta\times\cT\), with \(T\) acting as a control parameter.
The line plots and maps are therefore evaluations of the prescribed constitutive model along sampled temperature paths and on fixed-temperature slices.
For visualization, the state-domain maps are evaluated on a structured Cartesian sampling of the \((s_1,s_2)\)-plane and masked outside the admissible simplex \(\Delta\).

Figure~\ref{fig:A1A2A3_lines} reports the line diagnostics for the three prescribed regimes, and Fig.~\ref{fig:A_maps} reports the corresponding state-domain defect maps at representative sampled values of \(T\).

The corresponding numerical defect values are summarized in
Table~\ref{tab:benchmarkA_diagnostics}.

\begin{table}[htbp]
\caption{Summary of Benchmark~A exactness diagnostics computed over
($T\in[0.8,1.2]$) on the masked simplex grid. All reported quantities are
dimensionless diagnostic measures.}
\label{tab:benchmarkA_diagnostics}
\centering
\scriptsize
\begin{tabular}{llll}
\hline
Case & Path defect & \(\max\mathcal D_{ss}\) & \(\max\mathcal D_{sT}\) \\
\hline
A1 & \(4.16\times10^{-17}\) & \(0\) & \(0\) \\
A2 & \(5.06\times10^{-3}\) & \(0\) & \(1.84\times10^{-1}\) \\
A3 & \(1.13\times10^{-2}\) & \(1.91\) & \(6.20\times10^{-1}\) \\
\hline
\end{tabular}
\end{table}

The magnitude of the diagnostics provides the regime classification used below.
Roundoff-level values, as in A1, indicate exactness within numerical tolerance.
A vanishing \(\mathcal D_{ss}\) combined with finite \(\mathcal D_{sT}\), as in
A2, identifies a regime that is exact on isothermal slices but nonexact on the
augmented state domain. Simultaneously finite \(\mathcal D_{ss}\) and
\(\mathcal D_{sT}\), as in A3, indicates full loss of exactness. In Benchmarks
B and C, the same diagnostic hierarchy is evaluated along the reduced
matrix--fracture trajectories.

\paragraph*{Case A1: fully exact data.}
In Figs.~\ref{fig:A1A2A3_lines}(a) and \ref{fig:A1A2A3_lines}(d), all reported diagnostics remain at numerical roundoff level across the sampled temperature interval.
Figs.~\ref{fig:A_maps}(a) and \ref{fig:A_maps}(d) show the same result over the simplex: no localized nonexact region appears anywhere in the sampled state domain.

\paragraph*{Case A2: slice-wise exact but nonisothermal nonexact.}
Case A2 is the benchmark that isolates the difference between the usual
isothermal treatment and the present nonisothermal framework. An isothermal
TD/gTD test performed at fixed \(T\) would classify this case as exact because
\(\mathcal D_{ss}=0\). The nonisothermal augmented-state test detects the
finite mixed defect \(\mathcal D_{sT}\), showing that fixed-temperature
exactness is not sufficient for full nonisothermal exactness.
In Figs.~\ref{fig:A1A2A3_lines}(b) and \ref{fig:A1A2A3_lines}(e), the saturation-sector defect and the slice-wise projection diagnostic remain negligible, whereas the mixed defect and its discrete RMS-type summary remain clearly nonzero.
Figs.~\ref{fig:A_maps}(b) and \ref{fig:A_maps}(e) show the same separation geometrically: \(\mathcal D_{ss}\) remains negligible, while \(\mathcal D_{sT}\) is finite across the admissible region.
Each fixed-temperature slice is therefore exact, but exactness is lost on the full augmented state domain once temperature variation is included.

\paragraph*{Case A3: fully nonexact data.}
In Figs.~\ref{fig:A1A2A3_lines}(c) and \ref{fig:A1A2A3_lines}(f), both defect sectors are clearly nonzero, and the slice-wise projection diagnostic is likewise finite.
Figs.~\ref{fig:A_maps}(c) and \ref{fig:A_maps}(f) show broad, structured nonexactness over the simplex.
Unlike A2, the loss of exactness is no longer purely mixed: it is already present within fixed-temperature slices, which is why the projection defect no longer vanishes.

These three cases reproduce the hierarchy predicted by the theory: A1 is fully exact, A2 is slice-wise exact but mixed nonexact, and A3 is fully nonexact.
Benchmark~A therefore validates both the exactness theorem and the defect decomposition used in the fractured tests.

%%%%%%%%%%%%%%%%%%%%%%%%%%%%%%%%%%%%%%%%
\begin{figure*}[t]
\centering
\includegraphics[width=0.32\textwidth]{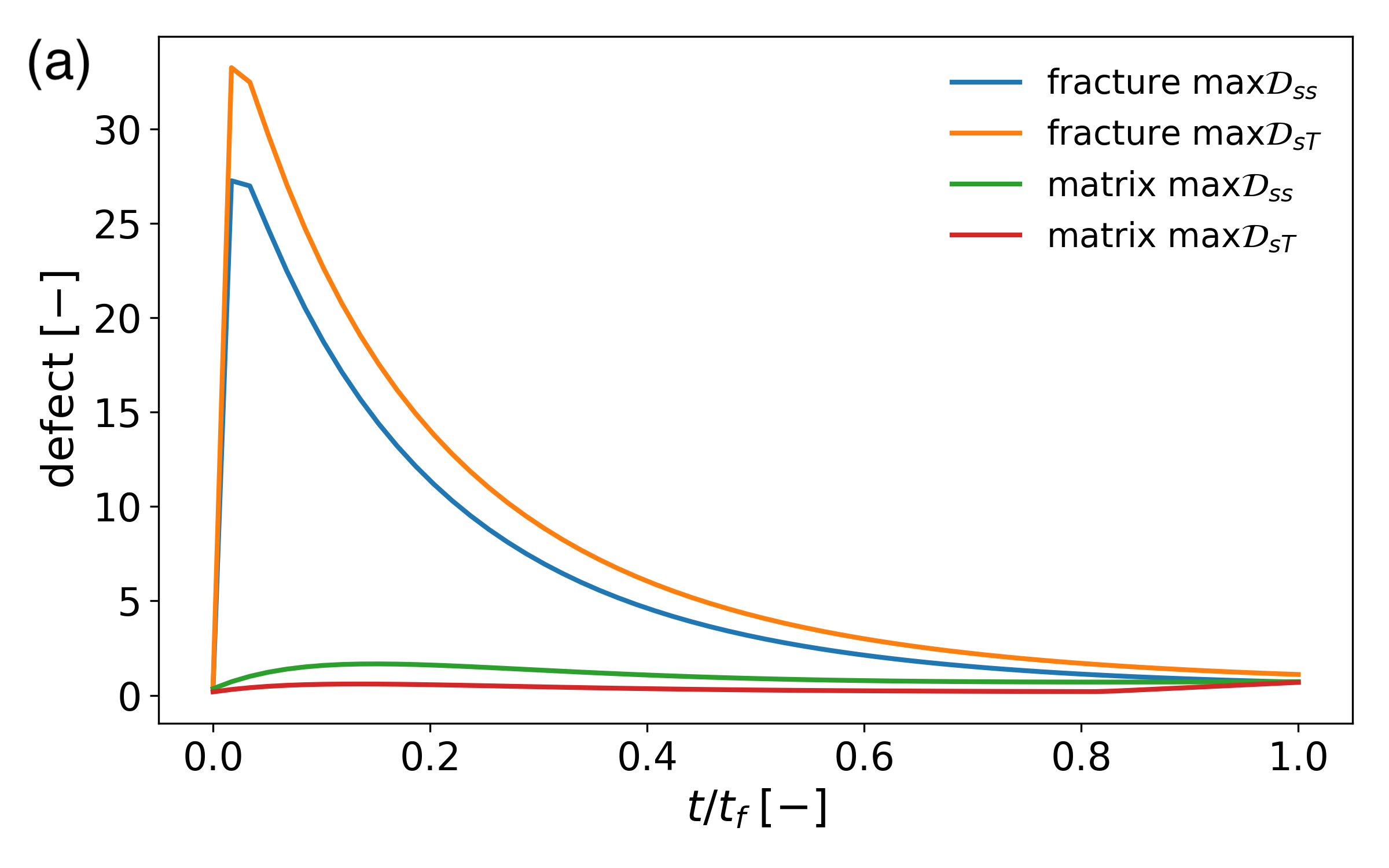}
\includegraphics[width=0.32\textwidth]{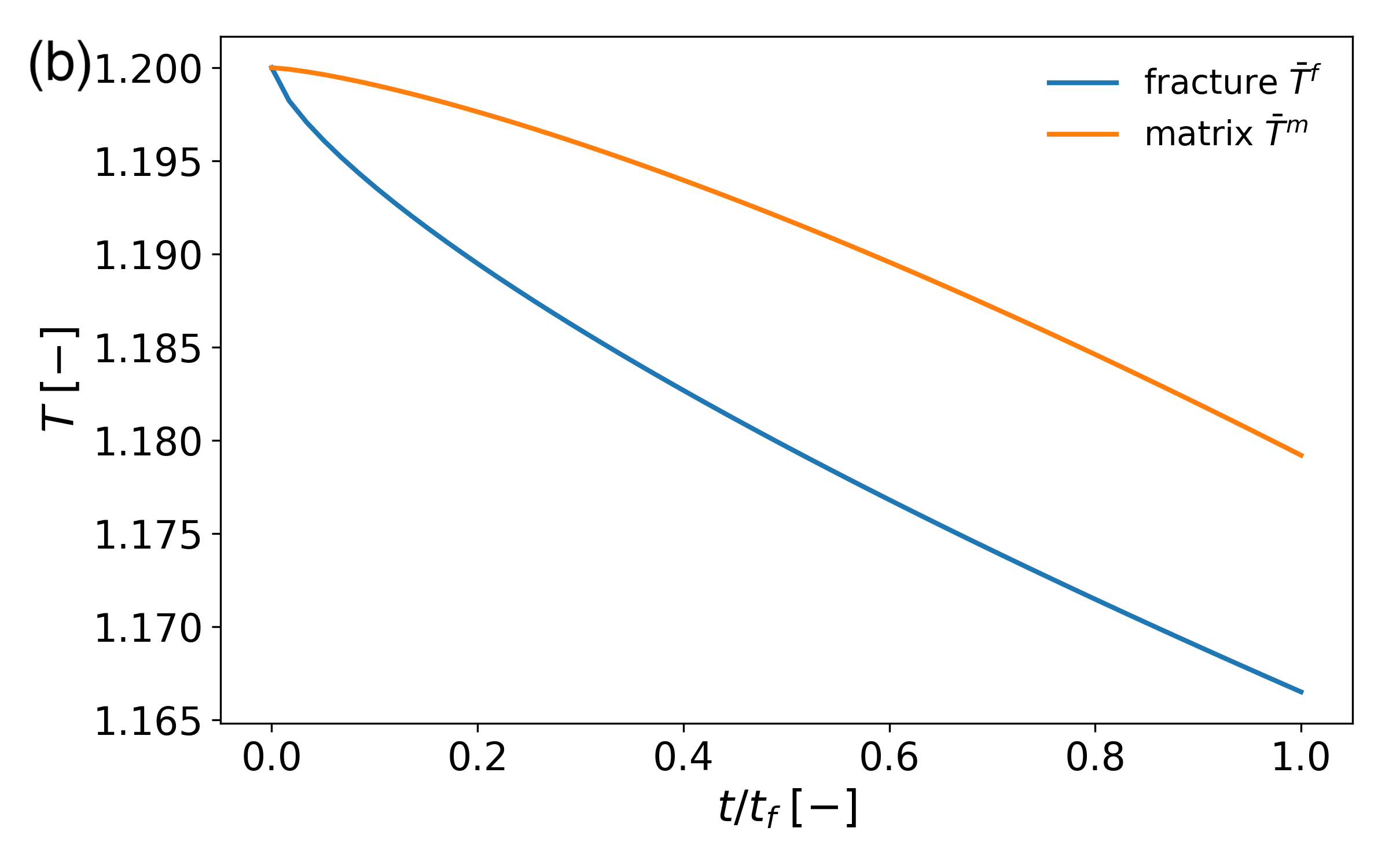}
\includegraphics[width=0.32\textwidth]{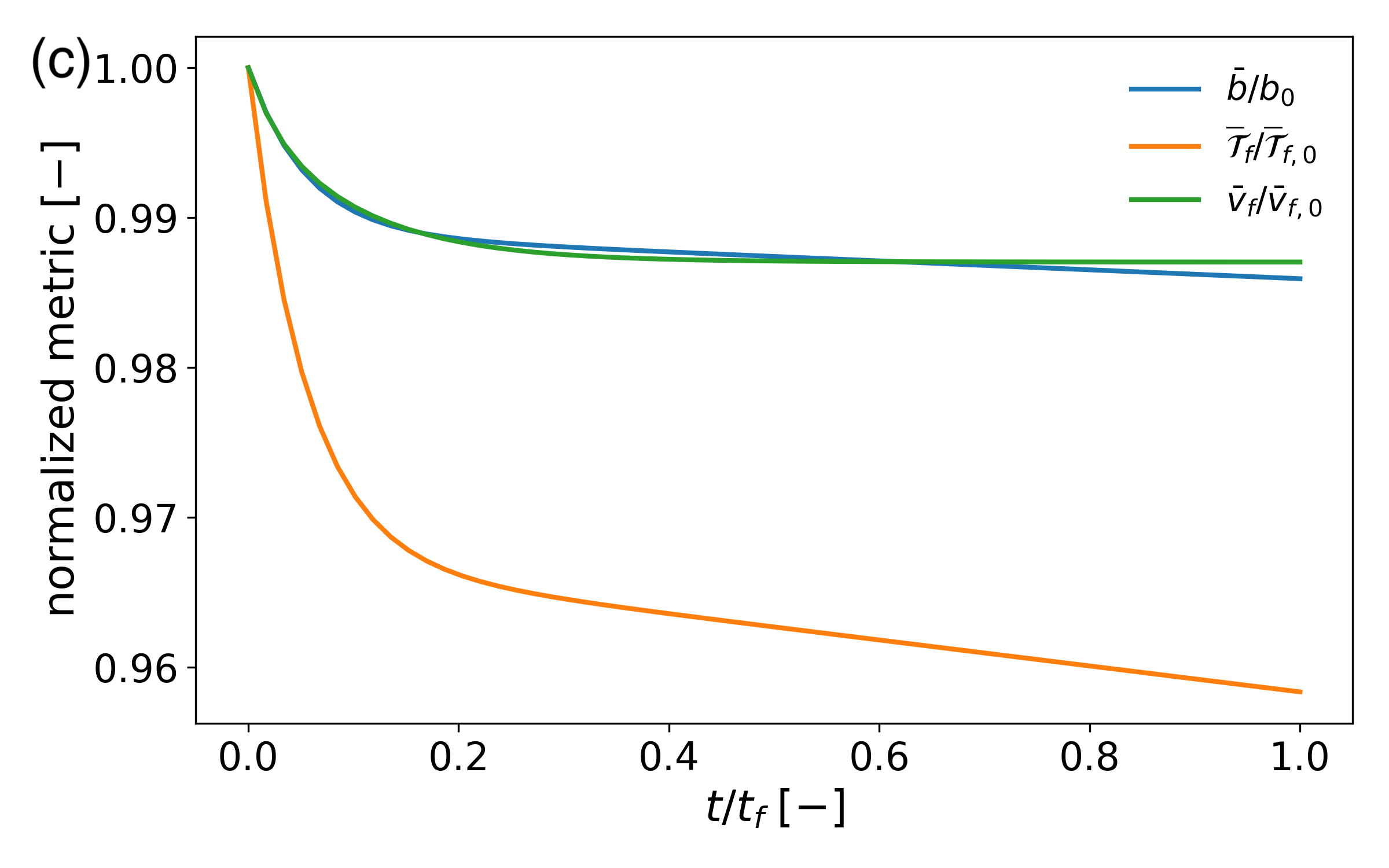}
\vspace{0.3cm}
\includegraphics[width=0.32\textwidth]{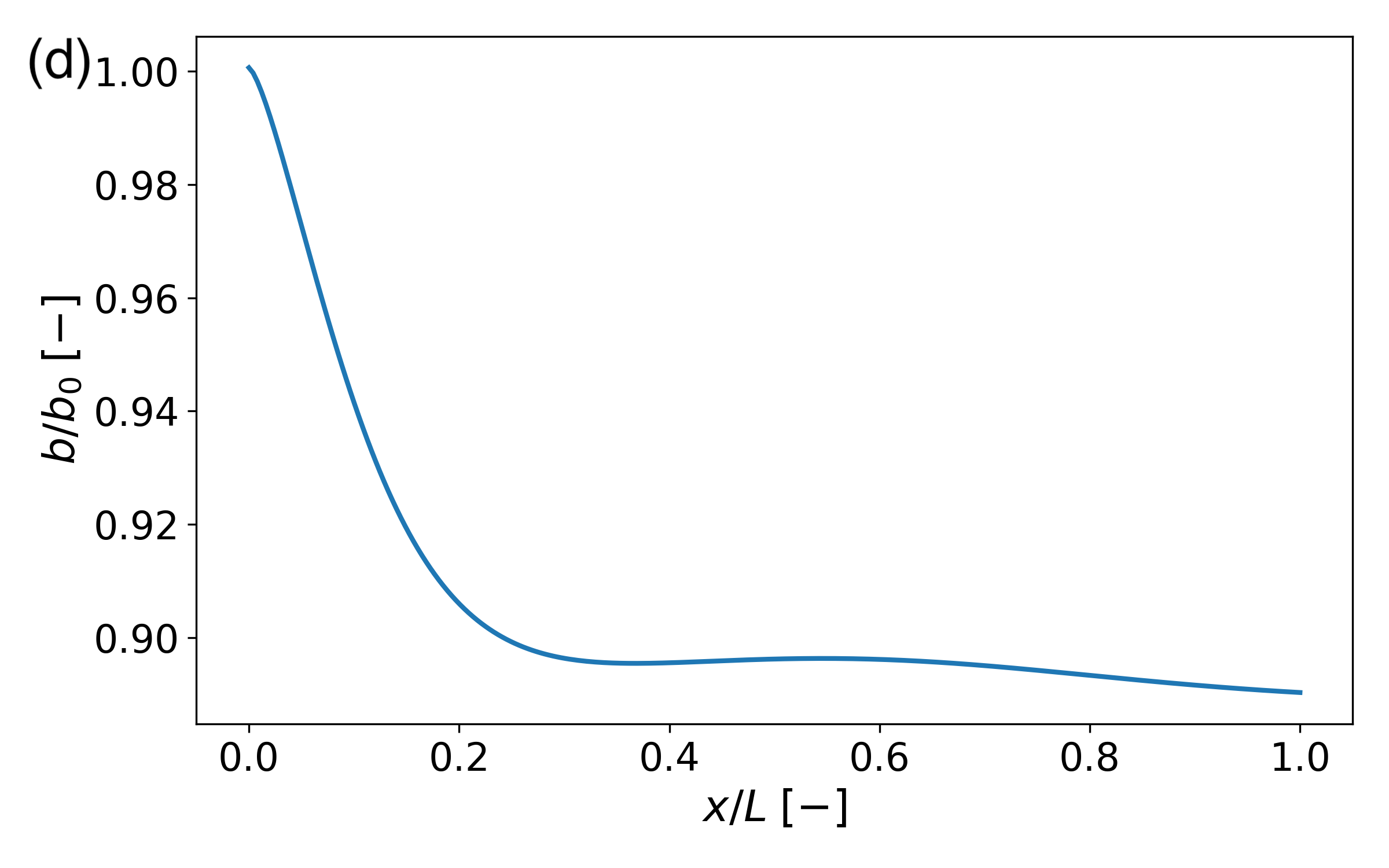}
\includegraphics[width=0.32\textwidth]{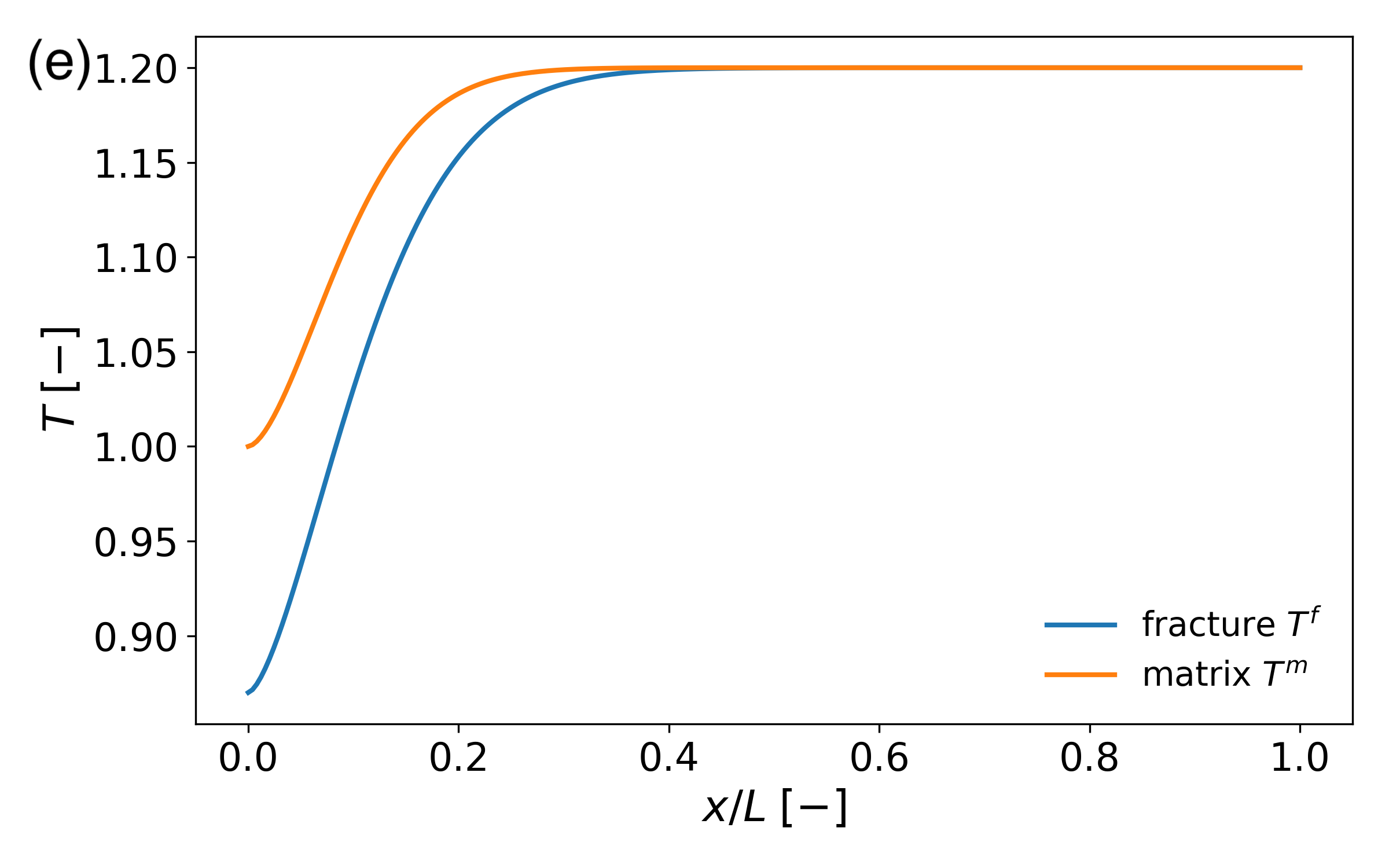}
\caption{
Benchmark~C: fractured thermal-front benchmark with prescribed aperture feedback.
(a) Time histories of the maximum local defects in matrix and fracture, showing an early amplification once aperture feedback is activated.
(b) Mean dimensionless temperatures in matrix and fracture versus normalized time.
(c) Normalized histories of the mean aperture \(\bar b/b_0\), mean fracture transmissivity \(\overline{\mathcal T_f}/\mathcal T_f(b_0)\), and mean fracture velocity magnitude \(\overline{\|\vt^f\|}/\|\vt^f\|_0\).
(d) Final fracture-aperture profile \(b/b_0\) versus normalized position.
(e) Final dimensionless temperature profiles in fracture and matrix versus normalized position.
Aperture feedback modifies the thermo-hydraulic trajectory of the fracture continuum and thereby amplifies and reshapes the defect history. Large defects here simply indicate that the system has moved well outside the exact regime, so the projected closure is being exercised in the nonexact setting it is designed to diagnose.
}
\label{fig:benchmarkC_all}
\end{figure*}
%%%%%%%%%%%%%%%%%%%%%%%%%%%%%%%%%%%%%%%%

\subsection{Benchmark B: fractured thermal front with fixed aperture}
\label{subsec:benchmark_B_num}

The fixed-aperture fractured benchmark is summarized in Fig.~\ref{fig:benchmarkB_all}, which collects the defect history, mean temperatures, final temperature profiles, final saturation profiles, and the reduced geometry and boundary conditions.

The final-profile and time-series diagnostics are summarized in
Table~\ref{tab:benchmarkB_summary}.
All quantities in Table~\ref{tab:benchmarkB_summary} are nondimensional
reduced parameters unless otherwise stated.

\begin{table}[htbp]
\caption{Nondimensional parameters used in Benchmark B.}
\label{tab:benchmarkB_summary}
\centering
\scriptsize
\begin{tabular}{llll}
\hline
Quantity & Symbol & Value & Unit/status \\
\hline
Domain length & \(L\) & \(1\) & - \\
Grid points & \(N_x\) & \(200\) & count \\
Final time & \(t_f\) & \(1\) & - \\
Time step & \(\Delta t\) & \(2\times10^{-3}\) & - \\
Initial saturation & \(\s_0\) & \((0.25,0.25)\) & -\\
Initial temperatures & \(T_0^m,T_0^f\) & \(1.20,1.20\) & - \\
Fracture inlet temperature & \(T_{\rm in}^f\) & \(0.85\) & - \\
Matrix inlet temperature & \(T_{\rm in}^m\) & \(1.15\) & - \\
Fracture inlet saturation & \(\s_{\rm in}^f\) & \((0.45,0.15)\) & -\\
Matrix inlet saturation & \(\s_{\rm in}^m\) & \((0.28,0.22)\) & - \\
Fracture velocity & \(v_f\) & \(0.35\) & - \\
Matrix velocity & \(v_m\) & \(0.02\) & - \\
Aperture & \(b\) & \(10^{-3}\) & - \\
Fracture transmissivity & \(\mathcal T_f\) & \(b^3/12\) & - \\
Heat exchange coefficient & \(\eta_h\) & \(2.0\) & - \\
Saturation exchange coefficient & \(\eta_s\) & \(0.5\) & - \\
Matrix diffusion coefficient & \(D_m\) & \(10^{-3}\) & - \\
Fracture diffusion coefficient & \(D_f\) & \(3\times10^{-3}\) & - \\
Derivative steps & \(\epsilon_s,\epsilon_T\) & \(10^{-6},10^{-4}\) & FD increments \\
\hline
\end{tabular}
\end{table}

The dominant contribution to the loss of exactness is mixed rather than
saturation-sector. Figure~\ref{fig:benchmarkB_all}(a) shows that the
saturation-sector defect remains negligible, while the mixed defect remains
finite in both continua. The time-series diagnostics give \(\max\mathcal D_{sT}^f=0.1845\) and \(\max\mathcal D_{sT}^m=0.1840\), whereas the saturation-sector defect remains zero. Benchmark~B therefore realizes dynamically the same structure isolated in
A2: an isothermal slice-wise test would indicate exactness, but the full
nonisothermal state-domain diagnostic detects mixed nonexactness.

The thermal plots explain that separation.
Figure~\ref{fig:benchmarkB_all}(b) shows that the fracture temperature departs from the initial state more rapidly than the matrix temperature.
The final temperature profiles in Fig.~\ref{fig:benchmarkB_all}(c) show the same contrast spatially: the fracture undergoes the stronger thermal excursion, whereas the matrix profile remains smoother and less perturbed.

The saturation profiles in Figs.~\ref{fig:benchmarkB_all}(d) and \ref{fig:benchmarkB_all}(e) complete the picture.
The fracture samples a wider range of saturations than the matrix, with
\(s_1^f\) reaching \(0.4489\) and \(s_2^f\) decreasing to \(0.1504\). The
matrix response is weaker and smoother.
The two continua therefore sample different trajectories in the augmented state domain even though the aperture is fixed.

Figure~\ref{fig:benchmarkB_all}(f) summarizes the reduced geometry and boundary conditions used in the benchmark and serves only to orient the reader physically.

Benchmark~B thus provides the transition from state-domain verification to fractured dynamics.
Even without aperture feedback, nonisothermal forcing and matrix--fracture exchange already produce continuum-dependent defect histories dominated by the mixed sector.

%%%%%%%%%%%%%%%%%%%%%%%%%%%%%%%%%%%%%%%%
\begin{figure}[t]
\centering
\includegraphics[width=0.95\columnwidth]{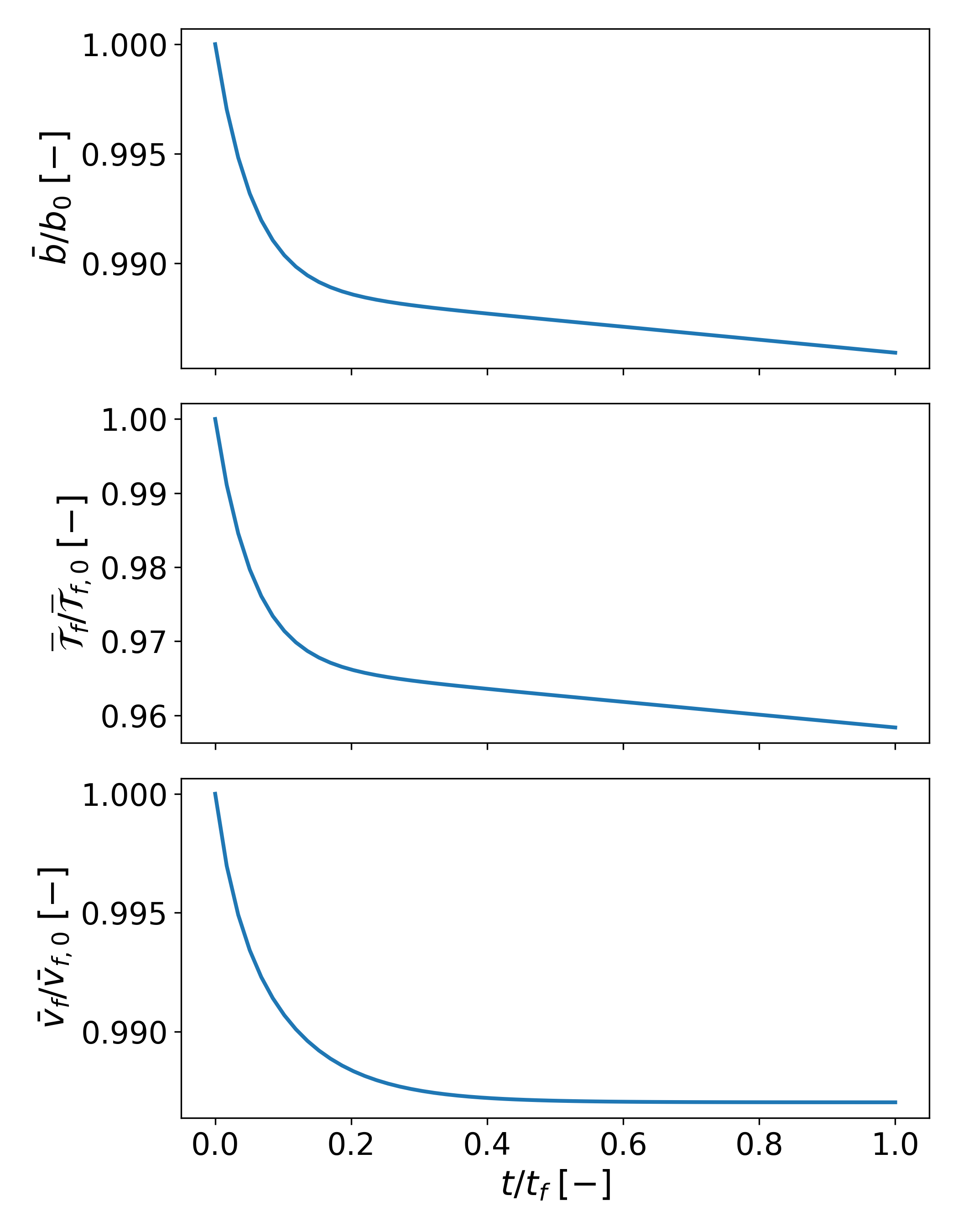}
\caption{
Benchmark~C: aperture-feedback diagnostics on separate axes. From top to bottom: mean normalized aperture \(\bar b/b_0\), mean normalized fracture transmissivity \(\overline{\mathcal T_f}/\mathcal T_f(b_0)\), and mean normalized fracture velocity magnitude \(\overline{\|\vt^f\|}/\|\vt^f\|_0\), all versus normalized time \(t/t_f\). The separate axes make their relative amplitudes easier to compare.
}
\label{fig:benchmarkC_metrics_split}
\end{figure}
%%%%%%%%%%%%%%%%%%%%%%%%%%%%%%%%%%%%%%%%

\subsection{Benchmark C: fractured thermal front with prescribed aperture feedback}
\label{subsec:benchmark_C_num}

The aperture-feedback benchmark is summarized in Fig.~\ref{fig:benchmarkC_all}, which collects the defect history, temperature history, normalized aperture/transmissivity/velocity histories, and the final temperature and aperture profiles.
The same aperture-feedback diagnostics are shown again in Fig.~\ref{fig:benchmarkC_metrics_split} on separate axes to make their relative magnitudes easier to read.

The aperture-feedback setup and profile ranges are summarized in
Table~\ref{tab:benchmarkC_summary}. The principal defect values are reported
in the text below.

\begin{table}[htbp]
\caption{Parameters used in the aperture-feedback diagnostics of Benchmark C.}
\label{tab:benchmarkC_summary}
\centering
\scriptsize
\begin{tabular}{llll}
\hline
Quantity & Symbol & Value & Unit/status \\
\hline
Time samples & \(N_t\) & \(60\) & count \\
Spatial samples & \(N_x\) & \(250\) & count \\
Initial aperture & \(b_0\) & \(6.75\times10^{-4}\) & - \\
Initial mean velocity & \(v_{f,0}\) & \(0.120\) & -\\
Fracture mean temperature range & \(\langle T^f\rangle\) & \(1.1665\)--\(1.2000\) & - \\
Matrix mean temperature range & \(\langle T^m\rangle\) & \(1.1792\)--\(1.2000\) & - \\
Final aperture range & \(b\) & \(6.01\)--\(6.75\)\(\times10^{-4}\) & - \\
Fracture transmissivity & \(\mathcal T_f\) & \(b^3/12\) & - \\
\hline
\end{tabular}
\end{table}

Compared with Benchmark~B, the defect history changes qualitatively.
Figure~\ref{fig:benchmarkC_all}(a) shows a strong early peak in the fracture
defects before relaxation, while the matrix defects are no longer negligible.
The time-series data give a peak fracture mixed defect of \(33.2609\), followed
by a final value of \(1.1020\). Aperture feedback therefore amplifies the
excursion away from exactness, especially in the fracture continuum.

The thermal behavior remains consistent with that picture.
Figure~\ref{fig:benchmarkC_all}(b) shows that the fracture mean temperature departs more rapidly from the initial state than the matrix mean temperature.
The final temperature profiles in Fig.~\ref{fig:benchmarkC_all}(e) show the
same separation spatially: the fracture spans
\(T^f=0.8700\)--\(1.2000\), whereas the matrix spans
\(T^m=1.0000\)--\(1.2000\).

The feedback mechanism becomes explicit in the aperture-related diagnostics.
Figure~\ref{fig:benchmarkC_all}(c) shows the normalized histories of the mean aperture, mean fracture transmissivity, and mean fracture velocity magnitude.
All three quantities change from their initial values, but not at the same rate.
The transmissivity responds more strongly than the aperture itself because of the cubic-law dependence \(\mathcal T_f(b)\propto b^3\), while the fracture velocity records the associated transport response.
The same trend is resolved more clearly in Fig.~\ref{fig:benchmarkC_metrics_split}, where the three histories are plotted on separate axes.

The final aperture profile in Fig.~\ref{fig:benchmarkC_all}(d) remains
spatially heterogeneous, with values between \(6.01\times10^{-4}\) and
\(6.75\times10^{-4}\). Through the cubic transmissivity law, this produces a
fracture transmissivity range from \(1.81\times10^{-11}\) to
\(2.57\times10^{-11}\). Thus, even modest aperture variation is amplified in
the hydraulic response.

A large defect does not invalidate this benchmark.
It indicates that the system has moved well outside the exact regime and that the projected closure is being exercised precisely in the nonexact situation it is meant to diagnose.

Aperture evolution does not change the exactness criterion itself, which remains governed by \eqref{eq:compat_ss}--\eqref{eq:compat_sT}, but it does change the trajectory by which the fractured system moves through the augmented state domain.
That is why the defect history is not only larger than in Benchmark~B, but also qualitatively reshaped by the thermo-hydraulic feedback loop.
The benchmarks reported here validate the augmented-state exactness diagnostics
and illustrate how thermal transport, matrix--fracture exchange, and prescribed
aperture feedback move the system through different regions of state domain. 
They are not
intended as a full production-solver validation of the projected
global-pressure surrogate against an independent phase-pressure simulator. Such
a comparison would require a separate phase-pressure implementation with
identical constitutive data, pressure and phase-flux comparisons, saturation
and temperature comparisons, mass and energy balance errors, and grid/time-step
convergence tests. These steps are outside the scope of the present diagnostic
validation and are left for future work.
Benchmark~C closes the numerical argument: the theorem identifies the
criterion, Benchmark~A verifies it directly, Benchmark~B shows how mixed
nonexactness arises dynamically under fixed aperture, and Benchmark~C shows how
aperture feedback amplifies that loss of exactness in a fractured setting.

\section{Conclusion}
\label{sec:conclusion}

We have formulated the nonisothermal extension of the TD/gTD exactness principle for global-pressure reductions of multiphase Darcy flow.
The key result is that the global-pressure reduction requires path independence
of the mobility-weighted capillary contribution in the full
saturation--temperature state domain. In mathematical terms, this is the
zero-curl compatibility condition for the capillary differential, rather than a
set of separate saturation-space conditions at fixed temperature.  
This yields two compatibility requirements: the classical saturation-sector condition and a genuinely nonisothermal mixed saturation--temperature condition.

The numerical results support this structure directly.
Benchmark~A verifies the three regimes predicted by the theory: fully exact behavior, slice-wise exact but nonisothermally nonexact behavior, and fully nonexact behavior.
In particular, the mixed-defect case confirms that a system may remain integrable on each fixed-temperature slice while still failing to admit a globally exact nonisothermal reduction.

We then embedded the theory in a minimal matrix--fracture model with heat transport, matrix--fracture thermal exchange, and reduced aperture evolution.
The fixed-aperture benchmark shows that temperature evolution alone can make the exactness condition dynamically relevant, with stronger defect histories in the fracture than in the matrix.
The aperture-feedback benchmark further shows that transmissivity feedback modifies the trajectory by which the coupled system probes the augmented state domain, and therefore changes the time history of exactness loss and recovery.

The projected closure should therefore be interpreted with a precise
limitation. It extracts the nearest gradient component of the saturation-sector
capillary field on fixed-temperature slices and preserves the conservative
structure of the reduced balance laws. It does not remove the mixed
saturation--temperature defect and does not restore exact equivalence to the
phase-pressure formulation in nonexact regimes.

The present numerical tests provide verification and diagnostic validation of
the augmented-state exactness criterion and show how reduced matrix--fracture
transport and aperture-sensitive transmissivity drive the system through
different exactness regimes. They do not constitute a full production-solver
validation against an independent phase-pressure implementation, nor do they
address discrete fracture networks or local thermal nonequilibrium models,
both of which are important in geothermal and fractured-reservoir applications
\cite{Medici2023DFNGeothermal,Lei2023EGSDFNTHM,Kostelecky2026LTNE}. Those
extensions require dedicated phase-pressure reference calculations,
grid/time-step convergence studies, and larger fracture-network geometries.

Within these limits, the work provides a compact framework for
nonisothermal global-pressure modeling in fractured multiphase flow: an exact
augmented-state criterion, a reduced fractured-flow setting in which that
criterion becomes operational, and a conservative projected closure for the
nonexact saturation sector.

% ============================================================
\begin{acknowledgments}
Ch.T. would like to acknowledge the support provided by the Deanship of Research (DOR) at King Fahd University of Petroleum \& Minerals (KFUPM) for funding this work through Early Research Career (ERC) grant No.\ EC251017.
\end{acknowledgments}

\bibliographystyle{apsrev4-2}
\bibliography{main}

\end{document}